\documentclass[11pt]{article}

\usepackage{amsmath,amsthm,amssymb,amsfonts,cancel}
\usepackage{thmtools,enumitem,subfigure}
\usepackage{mathtools}
\usepackage{dsfont}
\usepackage{tcolorbox}
\usepackage{booktabs}
\usepackage{bbm}
\usepackage{todonotes}
\usepackage{xspace}
\usepackage{tensor}
\usepackage{caption}
\usepackage{bm}
\usepackage{placeins}
\usepackage{hyperref}
\usepackage[capitalize]{cleveref}

\newcommand{\D}{\mathcal{F}}

\newcommand{\Ind}[1]{\mathds{1}_{\left[ #1 \right]}}

\newcommand{\Exp}[1]{\mathbb E \left[ #1 \right]} 

\renewcommand{\Pr}{\mathbb{P}}

\newcommand{\F}{\mathcal{F}}

\newcommand{\HopeOnline}{\textsc{Hope-Online}\xspace}
\newcommand{\HopeFull}{\textsc{Hope-Full}\xspace}
\newcommand{\EtOnline}{\textsc{ET-Online}\xspace}
\newcommand{\EtFull}{\textsc{ET-Full}\xspace}
\newcommand{\MaxMin}{\textsc{MaxMin}\xspace}

\mathchardef\mhyphen="2D 

\DeclarePairedDelimiter{\norm}{\lVert}{\rVert}

\let\originalleft\left
\let\originalright\right
\renewcommand{\left}{\mathopen{}\mathclose\bgroup\originalleft}
\renewcommand{\right}{\aftergroup\egroup\originalright}


\newtheorem{theorem}{Theorem}
\numberwithin{theorem}{section}
\newtheorem{definition}[theorem]{Definition}

\newtheorem{lemma}[theorem]{Lemma}

\newtheorem{proposition}[theorem]{Proposition}

\usepackage[suppress]{color-edits}
\addauthor{sb}{violet}
\addauthor{srs}{orange}
\addauthor{cy}{purple}
\addauthor{gj}{blue}
\usepackage{url}            
\usepackage{booktabs}       
\usepackage{amsfonts}       
\usepackage{nicefrac}       
\usepackage{microtype}      
\usepackage[margin=1in]{geometry}
\setcounter{secnumdepth}{2} 
\begin{document}
\title{Sequential Fair Allocation of Limited Resources under Stochastic Demands}

	\author{
	    Sean R. Sinclair \qquad Gauri Jain \qquad Siddhartha Banerjee \qquad Christina Lee Yu \\
	    Operations Research and Information Engineering, Cornell University \\
	    (\texttt{srs429, gj92, sbanerjee, cleeyu})\texttt{@cornell.edu}}
	   \date{}
\maketitle

\begin{abstract}
We consider the problem of dividing limited resources between a set of agents arriving sequentially with unknown (stochastic) utilities. Our goal is to find a fair allocation -- one that is simultaneously Pareto-efficient and envy-free. When all utilities are known upfront, the above desiderata are simultaneously achievable (and efficiently computable) for a large class of utility functions.  In a sequential setting, however, no policy can guarantee these desiderata simultaneously for all possible utility realizations. 

A natural online fair allocation objective is to minimize the deviation of each agent's final allocation from their fair allocation in hindsight. This translates into simultaneous guarantees for both Pareto-efficiency and envy-freeness. However, the resulting dynamic program has state-space which is exponential in the number of agents.  We propose a simple policy, {\HopeOnline}, that instead aims to `match' the ex-post fair allocation vector using the current available resources and `predicted' histogram of future utilities. We demonstrate the effectiveness of our policy compared to other heurstics on a dataset inspired by mobile food-bank allocations.~\footnote{The code for the experiments is available at \url{https://github.com/seanrsinclair/Online-Resource-Allocation}.}

\end{abstract}
    \newpage
	\setcounter{tocdepth}{2}
	\tableofcontents
	\newpage
	
    \section{Introduction}
\label{sec:introduction}

Our work here is motivated by a problem faced by a collaborating food-bank (Food Bank for the Southern Tier of New York (FBST)~\cite{fbst}) in operating their mobile food pantry program. Every day, the FBST uses a truck to deliver food supplies directly to distribution sites (soup kitchens/pantries/individuals/etc.). When the truck arrives at a site, the operator observes the demand there and chooses how much to allocate before moving to the next site. The number of people assembling at each site changes from day to day, and the operator typically does not know the demand of later sites (but has a sense of the demand distribution based on previous visits). Finally, the amount of food in the truck is usually insufficient to meet the total demand, and so the operator must under-allocate at each site, while trying to be \emph{fair} across all sites. The question we ask is: 

\begin{center}\emph{What is a fair allocation here, and how can it be computed}?\end{center}

In \emph{offline} problems where demands (more generally, utility functions or agent types) for all agents are known to the principal, there are many well-studied notions of fair allocation of limited resources. 
A relevant notion in our context is that a fair allocation is one satisfying three desiderata: \textit{Pareto-efficiency} (for any agent to benefit, another must be hurt), \textit{envy-freeness} (no agent prefers an allocation received by another), and \textit{proportionality} (each agent prefers the allocation received versus equal allocation).
This definition draws its importance from the fact that in many allocation settings it is known to be achievable.
In particular, when goods are divisible, then for a large class of utility functions, an allocation satisfying both is easily computed (via a convex program) by maximizing the Nash Social Welfare (NSW) objective subject to allocation constraints~\cite{varian1973equity,eisenberg1961aggregation}. 

Many practical settings, however, operate more akin to the FBST mobile food pantry, in that the principal makes allocation decisions \emph{online}, with incomplete knowledge of demands (more generally, utility functions) of future agents.  However, these principals do have access to historical data allowing them to generate histograms over utility functions for each agent.  Designing good allocation algorithms in such settings necessitates harnessing the (Bayesian) information of future demands to ensure equitable access to the resource, while also adapting to the online realization of demands as they unfold. 
Guaranteeing Pareto-efficiency, envy-freeness, and proportionality simultaneously turns out to be impossible in such settings (cf.~\cref{lemma:lower_bound}); the challenge thus is in defining meaningful notions of \emph{approximately-fair online allocations}, and developing algorithms which utilize distributional knowledge to achieve such allocations.

\subsection{Motivating Examples}
\label{section: examples}

\noindent \textbf{Mobile Food Pantry.}
Recent demands for food assistance have climbed at an enormous rate, and an estimated fourteen million children are not getting enough food due to the COVID epidemic in the United States~\cite{kulish_2020,brookings}.  With limitations on operating in-person stores, many foodbanks have increased their mobile food bank services. 
In these systems, the mobile food-bank must decide on how much food to allocate to a distribution center on arrival, without knowledge on the demands for locations to come. 

\medskip
\noindent \textbf{Stockpile Allocation.}  In many healthcare systems, states decide how to assign critical resources to patients~\cite{donahue2020fairness,huesch2012one}; for example, the US federal government has recently been tasked with distributing Remdesivir, an antiviral drug used for COVID-19 treatment~\cite{lupkin_2020}.  Another example is assigning patients to psychiatric beds, which have become more and more scarce in recent times~\cite{pinals_fuller_2017}. These decisions are made online, and aim to satisfy individual demands while efficiently using available resources.

\medskip
\noindent \textbf{Reservation Mechanisms.}
These are key for operating shared high-performance computing (HPC) systems~\cite{ghodsi2011dominant}. Cluster centers for HPC receive numerous requests online with varying demands for CPUs and GPUs. Algorithms must allocate resources to incoming jobs, with only distributional knowledge of future resource demands. Important to these settings is the large number of resources (number of GPUs, RAM, etc available at the center), requiring algorithms that scale to higher-dimensional problems.

\subsection{Overview of our Contributions}
\label{section:contributions}

We first formalize the online stochastic fair allocation problem described above, and demonstrate that in the online setting, there are distributions for which no policy can achieve Pareto-efficiency, envy-freeness, and proportionality over all realizations. This motivates studying approximate notions.

For any allocation, a natural \emph{(un)fairness score} is the maximum (alternately, weighted sum) of the deviation of the realized utilities in terms of envy-freeness, (normalized) Pareto-efficiency, and proportionality. 
In the online setting, any such score gives rise to a natural policy to minimize the expected value of this score, which can be formulated as a Markov decision process (MDP). However, since these metrics depend on the entire allocation vector, the complexity of finding the optimal policy is exponential in the number of agents, and also, is difficult to interpret in practice.~\footnote{As an example, consider optimal MDP-based policies for online max-min allocation~\cite{lien2014sequential}.}

Our main conceptual contribution is an alternate objective for online fair allocation, wherein we aim to minimize $\Exp{\norm{X^{opt} - X^{alg}}_{max}}$, the maximum difference between the allocation $X^{alg}$ made by any algorithm, and the \emph{offline} (i.e., ex post) fair solution $X^{opt}$. 
An $\epsilon$ approximation for this objective gives a $c\epsilon$-approximation for the fairness scores defined above (for some problem-specific constant $c$; see~\cref{lemma:e-close}).
The usefulness of this reformulation, though, is not immediately apparent, as it is still a high-dimensional objective. However, we then show how this reformulation allows us to harness recent ideas in model-predictive control to come up with simple algorithms with strong empirical performance.

Our proposed online allocation policy, \HopeOnline, is based on re-solving \textit{information-relaxed} optimization problems, where all future randomness is replaced with expected histograms. 
\HopeOnline is simple, scales to multiple resources, and in experiments, generates allocations close to the optimal fair allocation in hindsight. Moreover, it is balanced across agents, in that the per-agent difference in allocations between earlier and later arriving agents is uniformly small.  Thus, we believe \HopeOnline is a promising candidate for practical online fair allocation.
\srsedit{We do not believe that our work gives the final answer in defining fairness in sequential settings, but hope it will start conversation on how to formally incorporate ethics in sequential AI algorithms.}
    \subsection{Related Work}
\label{sec:related_work_short}

Before proceeding, we discuss some closely related work -- a more extensive survey is provided in~\cref{sec:related_work}.

Fairness in resource allocation, and the use of Nash Social Welfare, was pioneered in seminal work by Varian \cite{varian1973equity,varian1976two}. 
Since then, researchers have investigated fairness properties for both offline and online allocation, in settings with divisible and/or indivisible resources, and when either the agents or resources arrive online; for a comprehensive survey, see~\cite{aleksandrov2019online}. 
Our work focuses on online multi-resource allocation in Bayesian settings; in this context, previous work is mostly limited to non-adaptive algorithms, or consider adversarial arrivals. More importantly, we target additive approximations for individual metrics, instead of approximating global objectives (Eg. maxmin/Nash social welfare) which do not directly give any individual guarantees (see~\cref{app:ce}).

The most common line of work in online fair allocation considers settings where agents are static and items arrive over time~\cite{caragiannis2019unreasonable,zeng2019fairness,ijcai2019-49,bateni2018fair,azar2010allocate}; under stochastic arrivals, these tend to be easier as, intuitively, future allocations can be used to correct past imbalances. 
Closer to our setting are work on online cake cutting; this though is primarily under adversarial arrivals~\cite{walsh2011online,imdynamic,vardidynamic}. 
Finally, recent work considers upfront allocation of indivisible resources for stochastic demands~\cite{donahue2020fairness,elzayn2019fair}; these study similar tradeoffs between global objectives and individual guarantees as us, but are essentially static problems.

In terms of modeling, the closest work to ours is that of \cite{lien2014sequential}, who consider sequential allocation with stochastic demands (arising from similar practical problems with foodbank operations), and propose heuristics for maximizing the minimum utility. Their policies are defined only for single resource settings~\footnote{Note though that in this setting, maxmin allocation is envy-free; this however is not true even for two resources.}.
Using maxmin utility as an objective however leads to some instabilities in allocation policies (for example, a high demand upfront may lead to all future agents getting very small allocations); we demonstrate how our approach improves on this in our experiments.

    \section{Model}
\label{sec:preliminary}

A principal is tasked with dividing $K$ divisible resources among $n$ agents. Each resource $k \in [K]$ has a fixed budget $B_k$ that the principal can allocate.  Each agent $i\in[n]$ has an endowment (or size) $S_i \in \mathbb{R}$ and utility function $u(X_i, \theta_i)$ where $\theta_i \in \Theta$ is a latent type or preference of agent $i$, and $X_i \in \mathbb{R}^{K}$ denotes the \sbedit{normalized} allocation of resources received by agent $i$ (\sbedit{i.e., overall agent $i$ receives $S_iX_i$ units}~\footnote{\sbedit{The endowments $\{S_i\}$ correspond to pre-agreed (deterministic) weights for each agent that reflect their relative sizes; for example a typical foodbank may have $S_i=1$, while another which is twice its size may have $S_{i'}=2$; by normalizing allocations, we can compare them across agents on the same scale.}}). 
We assume the set of types $\Theta$ is finite, and the utility functions $u(X, \theta)$ are $L$-Lipschitz, concave, and strictly increasing with respect to the allocation $X$. Finally, we use $S = \sum_{i=1}^n S_i$ to denote the `effective size' of the population.

In the ex-post or \emph{offline} setting, agents' types $\{\theta_i\}_{i \in [n]}$ are known in advance and can be used by the principal to choose allocations $X \in \mathbb{R}^{n \times K}$ for each agent. 
In the \emph{online} setting the principal visits each agent sequentially in a fixed order $i = 1, \ldots, n$.  
Upon visiting agent $i$, the principal learns their latent type $\theta_i$ \srsedit{drawn from a known distribution $\F_i$}, and must choose allocation $X_i \in \mathbb{R}^K$ before continuing to the next agent. 
Allocation decisions are irreversible, and must obey the overall budget constraints.

\smallskip

\noindent \textbf{Notation}: We use $\mathbb{R}_+$ to denote the set of non-negative reals, and $\norm{X}_{max} = \max_{i,k} |X_{i,k}|$ to denote the matrix maximum norm, and $cX$ to denote entry-wise multiplication for a constant $c$. For allocation $X \in \mathbb{R}_+^{n \times K}$, we use $X_i = (X_{i,1}, \ldots, X_{i,K})$ to denote agent $i$'s normalized allocation, and $B = (B_1, \ldots, B_K)$ the budget vector. 
When comparing vectors, we use $X \leq Y$ to denote that each component $X_i \leq Y_i$. Hence, budget constraints can be written as $\sum_{i=1}^n S_i X_i \leq B$.

\smallskip

\noindent \textbf{Choices of Utility Functions}: In the context of food-bank allocations, we will consider two utility functions of interest.  
With a single resource, a common utility function is the so-called \emph{filling-ratio} $u(X, \theta) = \min\left(\frac{X}{\theta}, 1\right)$~\cite{lien2014sequential}.  
This corresponds to agents having linear utility until their allocation reaches their \emph{demand level} $\theta$, after which the utility is capped.  
While these utility functions are not strictly increasing, we show that all of the results extend to these functions in~\cref{app:proofs}.

For multiple resources, a common choice of utility functions are linear utilities where $u(x, \theta) = \langle \theta, x \rangle$.  Now the latent agent type $\theta \in \mathbb{R}_+^K$ denotes a vector of \emph{preferences} over each of the different resources. More details on modeling agent utilities and sizes are in~\cref{sec:experiments,app:experiments}.

\smallskip

\noindent \textbf{Limitations and Extensions}: The assumption that latent types $\Theta$ are finite is common in decision-making settings, as in practice, this distribution over types is approximated from historical data. 
One limiting assumption is that in the online setting, the principal only knows the latent type of one agent at a time.  In reality the principal could have some additional information about future types (via calling ahead, etc) that could be incorporated in deciding an allocation.  Our algorithmic approach naturally incorporates such additional information.

\subsection{Fairness and Efficiency in Offline Allocations}

\begin{figure}[!t]
	\centering
	\includegraphics[width=.4\columnwidth]{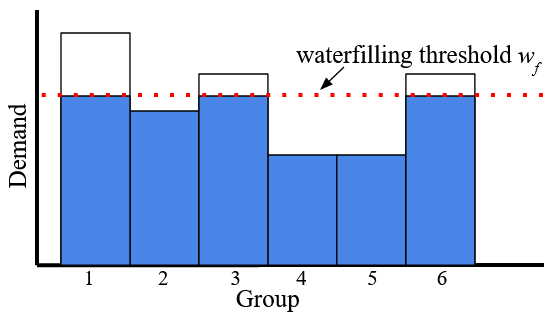}
	\caption{ 
	The waterfilling solution for maximizing NSW with a single divisible resource and filling-ratio utilities (agents on $x$-axis, demands/allocations on $y$-axis). The optimal NSW solution finds a threshold such that the sum of the areas below the demand and threshold equals the budget $B$.
	}
	\label{fig:waterfilling}
\end{figure}

To define a fair allocation in the offline setting (i.e., with known types $\{\theta_i\}_{i \in [n]}$), we adopt an approach proposed by Varian~\cite{varian1973equity}, which is widely used in the OR and economics literature and commonly referred to as `Varian Fairness'.  We will refer to this as fairness for brevity, but for a more detailed discussion on the advantages and limitations of this model, see \cref{app:varian}.
\begin{definition}[Fair Allocation]
\label{def:fairness}
Given types $\{\theta_i\}_{i\in[n]}$, endowments $\{S_i\}_{i \in [n]}$ and utility functions $\{u(\cdot,\theta)\}_{\theta \in \Theta}$, an allocation $X = \{X_{i}\in\mathbb{R}^K_+ \mid \sum_{i=1}^n S_i X_i \leq B\}$ is said to be fair if it simultaneously satisfies the following:
\begin{enumerate}\itemsep0em 
\item \textit{Envy-Freeness} (EF): For every pair of agents $i,j$, we have $u(X_i, \theta_i) \geq u(X_j, \theta_i)$.
\item \textit{Pareto-Efficiency} (PE): For any allocation ${Y}\neq X$ such that $u(Y_i,\theta_i)> u(X_i,\theta_i)$ for some agent $i$, there exists some other agent $j$ such that $u(Y_j,\theta_j)< u(X_j, \theta_j)$. 
\item \textit{Proportional} (Prop): For any agent $i$ we have $u(X_i, \theta_i) \geq u(B/S, \theta_i)$ where $S = \sum_{i=1}^n S_i$.
\end{enumerate}
\end{definition}
\noindent 
While the three properties form natural desiderata for a fair allocation, the power of this definition lies in that asking for them to hold simultaneously rules out many natural (but unfair) allocation policies. 
For example, in the food-bank setting with a single resource and filling-ratio utilities, any allocation that either exhausts the budget or meets the total demand $(\sum_{i} \theta_i)$ is Pareto-efficient. 
One example of this is a \textsc{Greedy} algorithm that assigns $X_i = \theta_i$, until running out of resources.  However, the algorithm is not envy-free as any agent who receives no resources will be envious of an agent who does.
On the other hand, the \textsc{Equal-Allocation} algorithm which assigns $X_i = \frac{B}{S}$ units of resource to each agent trivially achieves EF and PO, but is not necessarily PE (e.g. when $\theta_i < B/S$ for some $i$ and $\theta_j > B/S$ for some $j$). 

More generally, allocation rules based on maximizing a global function such as utilitarian welfare (sum of agent utilities) or egalitarian welfare (the maximin allocation, or more generally, the leximin allocation~\cite{bogomolnaia2001new,lien2014sequential} where one maximizes the minimum utility, and subject to that the second minimum, and so on) are Pareto-efficient, but tend to violate envy-freeness, as they focus on global optimality rather than per-agent guarantees. 
A remarkable exception to this, however, is the Nash Social Welfare $\prod_{i=1}^n u(X_i, \theta_i)^{S_i/S}$:
\begin{proposition}[Theorem 2.3 in \cite{varian1973equity}]
\label{prop:fair}
An allocation that maximizes the Nash Social Welfare is Pareto-efficient, envy-free, and proportional (hence fair).
\end{proposition}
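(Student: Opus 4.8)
The plan is to work with the equivalent objective $\sum_{i=1}^n \frac{S_i}{S}\log u(X_i,\theta_i)$, which is a concave maximization over the convex feasible set $\{X \ge 0 : \sum_i S_i X_i \le B\}$; since $\log$ is strictly increasing, its maximizer $X^\star$ is exactly the NSW maximizer. Pareto-efficiency is then immediate: if some feasible $Y$ satisfied $u(Y_i,\theta_i) \ge u(X_i^\star,\theta_i)$ for all $i$ with a strict inequality for some agent, every factor of the product would weakly increase and one would strictly increase (utilities are positive), contradicting optimality of $X^\star$. So the real content lies in envy-freeness and proportionality.

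For those two I would establish the competitive-equilibrium-from-equal-incomes structure underlying Varian's theorem. First I would argue the budget binds (by strict monotonicity) and invoke a constraint qualification (the equal split $\bar X_i = B/S$ is strictly feasible) to produce KKT multipliers $\lambda \in \RR_+^K$ for the budget constraints; setting the price vector $p = S\lambda$, stationarity gives, for each agent $i$ and resource $k$, $\partial_k u(X_i^\star,\theta_i) \le p_k\, u(X_i^\star,\theta_i)$ with equality whenever $X_{i,k}^\star > 0$. The crucial step is to turn this into an equal-expenditure statement $p\cdot X_i^\star = 1$ for every $i$: contracting the stationarity identity with $X_i^\star$ gives $p\cdot X_i^\star = \nabla u(X_i^\star,\theta_i)\cdot X_i^\star / u(X_i^\star,\theta_i)$, and for the utilities considered here (linear, and the filling-ratio once one checks NSW never allocates past an agent's demand), Euler's identity for functions homogeneous of degree one yields $\nabla u(X_i^\star,\theta_i)\cdot X_i^\star = u(X_i^\star,\theta_i)$, so $p\cdot X_i^\star = 1$. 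Summing $S_i\,(p\cdot X_i^\star)$ over $i$ against the tight budget also gives $p\cdot(B/S) = 1$.

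With equal expenditures in hand, both remaining properties follow from a single concavity bound. For any agent $i$ and any target $Z \ge 0$, concavity together with the (sub)gradient/price inequality gives $u(Z,\theta_i) \le u(X_i^\star,\theta_i)\,(1 + p\cdot Z - p\cdot X_i^\star)$. Taking $Z = X_j^\star$ and using $p\cdot X_j^\star = 1 = p\cdot X_i^\star$ yields $u(X_j^\star,\theta_i)\le u(X_i^\star,\theta_i)$, i.e.\ envy-freeness; taking $Z = B/S$ and using $p\cdot(B/S)=1$ yields $u(B/S,\theta_i)\le u(X_i^\star,\theta_i)$, i.e.\ proportionality. Equivalently and more robustly, this says $X_i^\star$ maximizes $u(\cdot,\theta_i)$ over the common budget set $\{Z\ge 0 : p\cdot Z \le 1\}$, which contains both $X_j^\star$ and $B/S$; framed this way the argument avoids differentiating $u$ altogether.

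I expect the main obstacle to be the equal-expenditure step, since it is precisely where homogeneity of the utilities is used: without it (general concave $u$) the NSW maximizer need not be envy-free, so the proof must genuinely lean on the structure of the specific utility functions. The secondary technical points are justifying the KKT multipliers (constraint qualification and tightness of the budget) and handling the non-smoothness and non-strict monotonicity of the filling-ratio utility, which I would treat via subgradients and by first showing no agent is allocated beyond its demand at the optimum; these nonsmooth cases are where I would rely on the extended treatment in \cref{app:proofs}.
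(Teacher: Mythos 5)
Your proof is correct, but it is not the paper's route --- in fact the paper never proves \cref{prop:fair} in this generality at all: it imports the statement from Varian, and the only fairness-of-NSW proof it actually contains is for the single-resource, filling-ratio special case (\cref{thm:single-resource}, proved in \cref{app:proofs}). That proof is concrete: KKT conditions identify the Eisenberg--Gale maximizer as the explicit waterfilling allocation $X_i^{opt}=\min\{w_f,\theta_i,B\}$, and then PE, EF, and Prop are verified case by case (whether $X_i^{opt}=\theta_i$ or $X_i^{opt}=w_f$, using $w_f \geq B/S$) directly from that closed form. You instead reconstruct the general competitive-equilibrium-from-equal-incomes argument behind Varian's theorem: KKT multipliers as prices, equal expenditure $p\cdot X_i^\star = 1$ via Euler's identity for degree-one homogeneous utilities, and the supergradient/budget-set inequality delivering EF and Prop in one stroke. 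Your route buys generality --- it covers multiple resources and any concave utility homogeneous of degree one, in particular the paper's linear utilities, which the paper's own written proof does not address --- while the paper's route is more elementary and produces the explicit waterfilling characterization that its algorithms and experiments rely on anyway. (One small slip: the equal split $B/S$ is only weakly feasible, so Slater's condition should be invoked with, say, half the equal split.)

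Your closing caveat deserves emphasis because it is substantive and correct: under the paper's blanket assumptions (concave, Lipschitz, strictly increasing) the proposition is \emph{false} without homogeneity, so the equal-expenditure step cannot be dispensed with. For instance, with one good, $B=2$, $u_1(x)=x$ and $u_2(x)=\log(1+x)$, the NSW maximizer gives agent $1$ strictly more than half the budget, and agent $2$ envies; with a single good and strictly increasing utilities, EF forces the equal split, which is not the NSW optimum here. So \cref{prop:fair} implicitly inherits Varian's homogeneity hypothesis, which the paper's utility classes satisfy (linear utilities globally; filling-ratio utilities on the region below demand, where the optimum can be shown to live).
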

In addition to simultaneously ensuring PE, EF and Prop properties, the NSW maximizing solution can also be efficiently computed via the following convex program called the \emph{Eisenberg-Gale} program~\cite{eisenberg1961aggregation}, obtained by taking the logarithm of the Nash Social Welfare:
\begin{align}
\label{eq:offline_nsw}
\max_{X \in \mathbb{R}_+^{n \times K}} & \,\frac{1}{S} \sum_{i=1}^n S_i  \log\left( u(X_i, \theta_i) \right) 
\\ \text{ s.t. } & \,\sum_{i=1}^n S_i X_i \leq B \nonumber
\end{align}
For a single divisible resource with filling-ratio utilities, the optimal solution to this program is found via a \emph{waterfilling algorithm}, illustrated in~\cref{fig:waterfilling} (see~\cref{sec:experiments} for more details).  Under more general utility functions and multiple resources, the optimal solution is more complex, but can be efficiently computed via standard convex programming techniques~\cite{boyd2004convex}.  
These two properties (that a maximizing allocation for NSW is easy to compute, and satisfies the fairness criteria simultaneously) is key to our proposed online allocation policies.

\subsection{Approximate Fairness in Online Allocations}

Recall that in the online setting the principal visits each agent sequentially in a fixed order $i = 1, \ldots, n$, whereupon visiting agent $i$ the principal sees their latent type $\theta_i \sim \F_i$ and decides on an allocation before continuing to the next agent.  A natural approach to obtain fair allocations in this setting is to develop allocations which satisfy Pareto-efficiency, envy-freeness, and proportionality ex-post.  However, we start with a negative result, showing that such an approach is infeasible even with two agents.
\begin{lemma}
\label{lemma:lower_bound}
For $n = 2$ agents with filling-ratio utilities there exists type distributions $\D_1$ and $\D_2$ such that no online algorithm can guarantee ex-post envy-freeness and Pareto-efficiency almost surely.
\end{lemma}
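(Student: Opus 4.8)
The plan is to exploit the irreversibility of online allocation: agent $1$'s allocation must be committed before agent $2$'s type is revealed, yet the unique ex-post fair (EF$+$PE) value for agent $1$ will depend on which type agent $2$ turns out to have. I would take a single resource ($K=1$), two unit-endowment agents ($S_1 = S_2 = 1$) with budget $B = 1$, fix $\theta_1 \equiv 1$ deterministically (so $\D_1$ is a point mass), and let $\D_2$ place positive probability on each of two demand levels, concretely $\theta_2 \in \{1/4,\, 1/2\}$.

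First I would pin down the structure of ex-post EF$+$PE allocations for filling-ratio utilities with two agents. In both scenarios total demand exceeds the budget ($\theta_1 + \theta_2 > 1$), so Pareto-efficiency forces the budget to be exhausted, $X_1 + X_2 = 1$, with no over-allocation, $X_i \le \theta_i$; otherwise one could shift unused or wasted mass to an unsatisfied agent and strictly improve them. Envy-freeness then fixes the split: writing $u(X_i,\theta_i) \ge u(X_j,\theta_i)$ for each agent and substituting the saturation thresholds shows the fair allocation is unique in each scenario. I would compute both explicitly. When $\theta_2 = 1/2$, both demands are at least $B/2$, and envy-freeness forces the equal split $(X_1,X_2) = (1/2,1/2)$. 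When $\theta_2 = 1/4$, agent $2$'s envy constraint $\min(4X_2,1) \ge \min(4X_1,1)$ forces $X_2 = 1/4$ and hence $X_1 = 3/4$. The crucial point is that the fair value of $X_1$ differs across realizations: $3/4$ versus $1/2$.

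With this in hand the impossibility is immediate. Any online policy must commit a (possibly randomized) value $X_1 = x$ using only $\theta_1 = 1$, before $\theta_2$ is drawn. Since the ex-post fair allocation is unique and demands $X_1 = 3/4$ when $\theta_2 = 1/4$ but $X_1 = 1/2$ when $\theta_2 = 1/2$, and $3/4 \ne 1/2$, no single committed $x$ is correct for both realizations; thus for at least one type no subsequent choice of $X_2 \in [0, 1-x]$ can restore both EF and PE. As each type has positive probability under $\D_2$ independently of $x$, the conditional failure probability is at least $\min(\Pr[\theta_2 = 1/4],\, \Pr[\theta_2 = 1/2]) > 0$, so the policy violates EF$+$PE with positive probability, ruling out an almost-sure guarantee; this also covers randomized policies, since the argument holds for every realized $x$.

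I expect the main obstacle to be the EF$+$PE characterization rather than the commitment argument. Filling-ratio utilities are not strictly increasing, so there exist spurious Pareto-efficient allocations (for instance saturating one agent while starving the other, or wastefully over-allocating) that must be carefully excluded, and one must verify that envy-freeness singles out a \emph{unique} allocation in each scenario rather than a whole interval. Getting this right --- and confirming that the benchmark fair allocation genuinely exists ex post (guaranteed by \cref{prop:fair}), so that the impossibility is attributable to online commitment and not to offline infeasibility --- is the delicate part.
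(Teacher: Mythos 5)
Your proposal is correct and follows essentially the same strategy as the paper's proof: construct a two-point uncertainty in the second agent's demand so that the unique ex-post EF$+$PE allocation assigns agent $1$ a different amount in each realization, which no committed online allocation can match on both positive-probability sample paths. The only cosmetic differences are your choice of parameters (deterministic $\theta_1$, $B=1$, demands $\{1/4,1/2\}$ versus the paper's symmetric $\mathrm{Uniform}\{1-\delta,1+\delta\}$ with $B=2$) and your more explicit verification of the uniqueness of the fair allocation, which the paper asserts without detail.
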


\begin{proof}
Consider a setting with $B = 2$, and each agent having $S_i=1$ and demand type $\theta \sim \D_i = \text{Uniform}(1 - \delta, 1 + \delta)$ for some arbitrary constant $0 < \delta < 1$.
	
With probability $\frac{1}{2}$ the algorithm will observe that the first agent has a demand of $1 + \delta$.  When the second agent has demand $1 + \delta$ the optimal fair allocation in hindsight will be $X^{opt} = (1,1)$.  When the second agent has demand $1 - \delta$ the optimal solution will be $X^{opt} = (1 + \delta, 1 - \delta)$.  Moreover, each allocation is the unique fair solution for the sequence of demand types $(1 + \delta, 1 +\delta)$ and $(1 + \delta, 1 - \delta)$ respectively.  Hence, no algorithm can achieve the ex-post fair allocation on all sample paths.
\end{proof}

\cref{lemma:lower_bound} shows that simultaneously achieving ex-post envy-freeness and Pareto-efficiency is futile, and hence we need to consider approximate fairness notions. 
While the example is trivial, it highlights the true difficulty in designing online allocations.  Any `fair' online allocation upon visiting agent $i$  must adapt to realized types thus far ($\{\theta_j\}_{j \leq i}$) and exploit the type distribution for future agents $(\{\F_{j})_{j > i}\})$. {It also highlights the difficulty in designing approximate fairness notions, as any such approach must be constructed with individual guarantees in mind.}

On that note, a reasonable modified fairness criteria is to seek allocations $X^{alg} \in \mathbb{R}^{n \times K}$ that minimize the \emph{distance} from (ex-post) envy-freeness, Pareto-efficiency, and proportionality. 
\sbedit{From a practical perspective, moreover, while measuring ex-post envy and distance to proportionality is simple, measuring the distance of an allocation to Pareto-efficiency is not straightforward.
An important proxy, however, is the \emph{resource waste} $B-\sum_{i} S_i X_i$ under any allocation; this follows from observing that any PE algorithm necessarily has no waste (see~\cref{app:proofs} for proof and discussion of an alternative approach)}:
\begin{proposition}
\label{lem:pe-waste}
If an allocation $X \in \mathbb{R}_+^{n \times K}$ is Pareto-efficient, then we have that $\sum_{i} S_i X_i = B$.
\end{proposition}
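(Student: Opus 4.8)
The plan is to prove the contrapositive: I will show that if an allocation $X$ is feasible but wastes resources (i.e., $\sum_i S_i X_i \neq B$), then $X$ cannot be Pareto-efficient. Since feasibility already gives $\sum_i S_i X_i \leq B$ componentwise, the negation of $\sum_i S_i X_i = B$ means there is at least one resource $k \in [K]$ with strict slack, namely $\sum_i S_i X_{i,k} < B_k$.

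Given such slack, I would construct an allocation $Y$ that weakly Pareto-dominates $X$, contradicting PE. Concretely, pick any agent (say agent $1$, using that $S_1 > 0$) and hand them the leftover of resource $k$: set $Y_j = X_j$ for all $j \neq 1$ and $Y_1 = X_1 + \epsilon e_k$, where $e_k$ is the $k$-th unit vector and $\epsilon = (B_k - \sum_i S_i X_{i,k})/S_1 > 0$. By construction $Y$ remains feasible, since the only modified budget coordinate is $k$, and there $\sum_i S_i Y_{i,k} = B_k$ while all other coordinates are unchanged.

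The crux is the utility comparison. Since $Y_1 \geq X_1$ with strict inequality in coordinate $k$, and each $u(\cdot, \theta)$ is strictly increasing in the allocation, we get $u(Y_1, \theta_1) > u(X_1, \theta_1)$, while $u(Y_j, \theta_j) = u(X_j, \theta_j)$ for every $j \neq 1$. Thus $Y$ makes agent $1$ strictly better off without lowering any other agent's utility, so there is no agent $j$ with $u(Y_j, \theta_j) < u(X_j, \theta_j)$. This directly contradicts the Pareto-efficiency of $X$, and hence a Pareto-efficient allocation can have no slack in any resource, giving $\sum_i S_i X_i = B$.

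I expect the only delicate point to be the precise meaning of ``strictly increasing with respect to the allocation'': the argument needs that raising a single coordinate of the vector $X_1$ strictly raises utility. This is exactly the monotonicity assumption stated in the model, and it is precisely where filling-ratio utilities fail (their utility saturates once the allocation meets the demand level $\theta$), which is why those utilities require the separate treatment promised in the appendix.
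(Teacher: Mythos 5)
Your proof is correct and follows essentially the same route as the paper's: assume slack in some resource $k$, hand the leftover to a single agent, and invoke strict monotonicity of utilities to contradict Pareto-efficiency. If anything, your version is slightly more careful than the paper's, since you scale the added amount by $1/S_1$ so that the modified consumption $\sum_i S_i Y_{i,k}$ exactly meets $B_k$, and you correctly flag that strict monotonicity is the load-bearing assumption (which is why filling-ratio utilities need the separate appendix treatment).
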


\noindent We now define our proposed online fairness yardstick:
\begin{definition}[Ex-Post distance from PE, EF, and Prop]
\label{def:distance}
Given agents with types $\{\theta_i\}_{i \in [n]}$ and sizes $\{S_i\}_{i \in [n]}$, and resource budgets $\{B_k\}_{k \in [K]}$, for any online allocation $\{X^{alg}_i\}_{i \in [n]}$, we define:\\
the distance of  $\{X^{alg}_i\}_{i \in [n]}$ to envy-freeness as
\begin{align*}\Delta_{EF} \triangleq \max_{i,j \in [n]} \left(u(X^{alg}_j, \theta_i) - u(X^{alg}_i, \theta_i) \right)\end{align*} 
the distance of  $\{X^{alg}_i\}_{i \in [n]}$ to Pareto-efficiency as 
\begin{align*}\Delta_{PE} \triangleq \max_{k \in [K]} \frac{1}{n}(B_k - \sum_{i} S_i X^{alg}_{i,k})\end{align*} 
the distance of  $\{X^{alg}_i\}_{i \in [n]}$
to proportionality as 
\begin{align*}\Delta_{Prop} \triangleq \max_{i \in [n]} \left( u(B/S, \theta_i) - u_i(X^{alg}_i, \theta_i)\right)\end{align*}
\end{definition}
Note these are all random quantities, depending on both the realized types but also randomness in the allocation algorithm; moreover, any fair allocation necessarily has all of these quantities bounded above by zero. 
The distance to envy-freeness can be thought of as the worst-case envy of any individual, mimicking~\cref{def:fairness}; 
the distance to Pareto-efficiency is taken to be the average per-agent excess of resources wasted by the algorithm; 
the distance to proportionality is the worst-case loss agent experiences under their allocation compared to equal allocations.
The normalization ensures that all are measured on the same scale.

A natural definition of an optimal online fair allocation now is one which minimizes $$\Exp{\max\{\Delta_{PE}, \Delta_{EF}, \Delta_{Prop}\}}$$ (or alternately, any weighted linear combination).  
Given type distributions $\{\D_i\}_{i \in [n]}$, finding such an allocation gives rise to a high-dimensional MDP, as each of the properties depends on the entire matrix of allocations. 
Moreover, since the problem has no obvious structure, the optimal solution may not have a simple form, and can be difficult to interpret.


To get around this, we consider an alternate objective function. Let $X^{opt}$ be the NSW maximizing solution in hindsight (i.e., given realized types $\{\theta_i\}$, solving \cref{eq:offline_nsw}). We instead seek allocations that try to uniformly minimize the expected difference between $X^{alg}$ and $X^{opt}$, which we know to be a fair allocation.
\begin{definition}[$\epsilon$-Fair Allocation]
\label{def:online-fairness}
Given type distributions $\{\F_i\}_{i \in [n]}$ and utility functions $\{u(\cdot, \theta)\}_{\theta \in \Theta}$, we say an online allocation algorithm $X \in \mathbb{R}_+^{n \times K}$ is $\epsilon$-fair if $\sum_{i=1}^n S_i X_i \leq B$ almost surely, and moreover, $$\Exp{\norm{X^{opt} - X^{alg}}_{max}} \leq \epsilon$$
where $X^{opt}$ is the NSW maximizing solution in hindsight.
\end{definition}
One advantage of this definition is that any allocation algorithm which is $\epsilon$-fair also satisfies similar $\epsilon$-additive guarantees in expectation for the earlier defined metrics (\cref{def:distance}) as established below.

\begin{lemma}
\label{lemma:e-close}
Suppose that an algorithm $X^{alg}$ satisfies $\Exp{\norm{X^{opt} - X^{alg}}_{max}} \leq \epsilon$.  Then we have
\begin{itemize}
\item \textit{Approximate Envy-Freeness}:
$$\Exp{\Delta_{EF}} = \Exp{\max_{i,j} \left( u(X_j^{alg}, \theta_i) - u(X_i^{alg}, \theta_i)\right)} \leq {2L \epsilon}.$$
\item \textit{Approximate Pareto-Efficiency}: $$\Exp{\Delta_{PE}} = \frac{1}{n}\Exp{\max_k(B_k - \sum_{i} S_i X_{i,k}^{alg})} \leq \frac{S}{n}\epsilon$$
\item \textit{Approximate Proportionality}: $$\Exp{\Delta_{Prop}} = \Exp{\max_i \left( u(B/S, \theta_i) - u_i(X_i^{alg}, \theta_i)\right)} \leq L \epsilon$$
\end{itemize}
\end{lemma}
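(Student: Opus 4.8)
The plan is to exploit that the hindsight optimum $X^{opt}$, being an NSW maximizer, \emph{exactly} satisfies envy-freeness, Pareto-efficiency, and proportionality (\cref{prop:fair}). For each of the three metrics I would add and subtract the corresponding quantity evaluated at $X^{opt}$, invoke the exact property to eliminate the ``optimal'' term, and control the residual by the pointwise closeness $\norm{X^{opt}-X^{alg}}_{max}$. Throughout I use that for any fixed agent $i$ the per-agent deviation $\norm{X^{opt}_i - X^{alg}_i}_\infty$ is dominated by the matrix max-norm $\norm{X^{opt}-X^{alg}}_{max}$, so every per-agent gap folds into this single scalar before taking expectations.

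For envy-freeness, I would write, for any pair $i,j$,
\begin{align*}
u(X^{alg}_j,\theta_i) - u(X^{alg}_i,\theta_i)
&= \big(u(X^{alg}_j,\theta_i)-u(X^{opt}_j,\theta_i)\big)
+ \big(u(X^{opt}_j,\theta_i)-u(X^{opt}_i,\theta_i)\big) \\
&\quad + \big(u(X^{opt}_i,\theta_i)-u(X^{alg}_i,\theta_i)\big).
\end{align*}
The middle bracket is $\le 0$ since $X^{opt}$ is envy-free, and each outer bracket is at most $L\norm{X^{opt}-X^{alg}}_{max}$ by $L$-Lipschitzness of $u(\cdot,\theta_i)$. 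Taking the max over $i,j$ and then expectations gives the $2L\epsilon$ bound. Proportionality is identical but simpler: I add and subtract $u(X^{opt}_i,\theta_i)$, drop the now-nonpositive term $u(B/S,\theta_i)-u(X^{opt}_i,\theta_i)$ using proportionality of $X^{opt}$, and bound the single remaining difference by $L\norm{X^{opt}-X^{alg}}_{max}$, yielding $L\epsilon$.

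For Pareto-efficiency I would instead use \cref{lem:pe-waste}: since $X^{opt}$ is PE it saturates the budget, i.e.\ $\sum_i S_i X^{opt}_{i,k}=B_k$ for every $k$. Hence $B_k - \sum_i S_i X^{alg}_{i,k} = \sum_i S_i\big(X^{opt}_{i,k}-X^{alg}_{i,k}\big) \le \sum_i S_i \norm{X^{opt}-X^{alg}}_{max} = S\,\norm{X^{opt}-X^{alg}}_{max}$; taking the max over $k$, multiplying by $1/n$, and taking expectations gives the $\tfrac{S}{n}\epsilon$ bound.

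None of these steps is a genuine obstacle---each is a one-line triangle-inequality argument once the exact fairness of $X^{opt}$ is invoked, followed by monotonicity of expectation applied to $\Exp{\norm{X^{opt}-X^{alg}}_{max}}\le\epsilon$. The only point requiring care is the norm bookkeeping: the Lipschitz estimates must be taken with respect to a norm on the per-agent allocation $X_i\in\mathbb{R}^K$ that is dominated by the matrix max-norm (e.g.\ $\ell_\infty$), so that the constant in front of $\epsilon$ is exactly $L$ rather than $L$ times a dimension-dependent factor; with this convention all three bounds follow directly.
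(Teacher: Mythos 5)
Your proposal is correct and follows essentially the same route as the paper's proof: the same add-and-subtract decompositions for envy-freeness and proportionality (eliminating the optimal terms via the exact fairness of the NSW solution from \cref{prop:fair}), the same use of \cref{lem:pe-waste} to saturate the budget for the Pareto-efficiency bound, and the same Lipschitz-plus-max-norm bookkeeping, i.e.\ $|u(X_i^{alg},\theta_i)-u(X_i^{opt},\theta_i)| \leq L\norm{X_i^{opt}-X_i^{alg}}_\infty \leq L\norm{X^{opt}-X^{alg}}_{max}$, before taking maxima and expectations. No gaps; nothing further is needed.
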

\begin{proof}(see \cref{app:proofs})
Follows directly from definitions and Lipschitzness of utility functions.
\end{proof}
    \section{Approximation Algorithms}
\label{sec:approximation_algorithms}

In this section we present \HopeOnline and its counterpart \HopeFull, scalable algorithms that approximate the Nash Social Welfare solution.  These solutions are motivated by approximation algorithms to dynamic programming solutions generated by resolving relaxed versions of the optimization problems \cite{vera2019bayesian}.  Moreover, the allocation rule for the algorithm corresponds to an easily computable policy, is interpretable, and scales well under multiple resources.

Our main algorithm, \textbf{H}istogram \textbf{o}f \textbf{P}reference \textbf{E}stimates \textbf{Online}, or \HopeOnline, arises from the observation that a natural approximation algorithm arises from replacing unknown quantities in \cref{eq:offline_nsw} with their distribution.  Moreover, in any envy-free allocation, two agents $i$ and $j$ with the same type $\theta$ will acquire the same allocation.  Hence, we can rewrite \cref{eq:offline_nsw} as follows:
\begin{align}
\label{eq:offline_nsw_type}
\max_{X \in \mathbb{R}_+^{|\Theta| \times K}} & \frac{1}{S} \sum_{\theta \in \Theta} \left(\sum_{i = 1}^n S_i \Ind{\theta_i = \theta}\right) \log(u(X_\theta, \theta)) \\
\text{s.t. } & \sum_{\theta \in \Theta} \left(\sum_{i=1}^n S_i \Ind{\theta_i = \theta}\right) X_\theta \leq B
\nonumber
\end{align}
where the allocation to any agent $i$ with type $\theta$ is $X_\theta$.
The \HopeOnline algorithm now approximates the NSW allocation in \cref{eq:offline_nsw_type} by re-solving the above program while replacing the unknown quantities $\Ind{\theta_i = \theta}$ with their expectation $(\Pr(\theta_i = \theta))$.
In more detail, for the $i^{th}$ agent, given the current budget vector $B^i$, we re-solve the Eisenberg-Gale program in~\cref{eq:offline_nsw_type} with all future demand replaced by the expected future histogram over types. Formally, at iteration $i$, the algorithm observes the latent type $\theta_i$ for agent $i$ and allocates $X_i^{alg} = X_{\theta_i}$ according to the solution to:
\begin{align*}
\max_{X \in \mathbb{R}_+^{|\Theta| \times K}} & \frac{1}{S} \sum_{\theta \in \Theta} N_\theta \log(u(X_\theta, \theta)) \\
\text{s.t. } & \sum_{\theta \in \Theta} N_\theta X_\theta \leq B^i.
\end{align*}
where $B^i = B^{i-1} - X_{i-1}^{alg}$ is the current available resources, and the expected histogram over types is defined as 
\begin{align*}
N_\theta = S_i \Ind{\theta_i = \theta} + \sum_{j=i+1}^n S_j \Pr(\theta_j = \theta)
\end{align*} 
Note here that as the type $\theta_i$ for agent $i$ is observed, the probability for agent $i$ is replaced by the Dirac-$\delta$ function on the observed value.

An alternative algorithm is \textbf{H}istogram \textbf{o}f \textbf{P}reference \textbf{E}stimates \textbf{Full}, or \HopeFull, which follows the same idea but instead solves the Eisenberg-Gale program with all agents (including agents already visited) to decide an allocation.  At iteration $i$, the algorithm observes the types $\{\theta_j\}_{j \leq i}$ and allocates to agent $i$ the allocation $X_i^{alg} = \min(X_{\theta_i}, B^i)$ according to the solution to:
\begin{align*}
\label{eq:full_weighted}
\max_{X \in \mathbb{R}_+^{|\Theta| \times K}} & \frac{1}{S} \sum_{\theta \in \Theta} \tilde{N}_\theta \log(u(X_\theta, \theta)) \\
\text{s.t. } & \sum_{\theta \in \Theta} \tilde{N}_\theta X_\theta \leq B, 
\end{align*}
and the expected histogram over types now given by:
\begin{align*}
 \tilde{N}_\theta  = \sum_{j=1}^{i}S_j\Ind{\theta_j = \theta} + \sum_{j=i+1}^n S_j\Pr(\theta_j = \theta).    
\end{align*}
This forms a natural estimator for~\cref{eq:offline_nsw_type}, in that for the last agent, it solves the exact same optimization problem as~\cref{eq:offline_nsw_type}.  However, experiments show that its performance is worse than \HopeOnline, as the algorithm uses the original budget $B$ instead of the remaining budget.

\begin{table*}[!ht]
\caption{
Comparison of fairness metrics (averaged over 1000 replications) on the single-resource online allocation problem with filling-ratio utilities. We compare the four unfairness metrics from~\cref{def:distance,def:online-fairness} (larger values correspond to lower scores; best value highlighted), over the two datasets described in \cref{sec:experiments_single} (i.i.d discretized Gaussian demands with $n = 100$, demands generated from FBST dataset with $n = 6$). \HopeOnline is best or second-best across all metrics and settings (note that \textsc{Greedy} naturally minimizes waste, while \textsc{Adaptive-Threshold} ensures Proportionality by design).
} \label{tab:fairness}
\setlength\tabcolsep{0pt} 
\footnotesize\centering
\begin{tabular*}{\textwidth }{@{\extracolsep{\fill}}r*8c}
\toprule
Algorithm &  \multicolumn{2}{c}{$\Exp{\norm{X^{opt} - X^{alg}}_{max}}$} & \multicolumn{2}{c}{$\Exp{\Delta_{EF}}$} & \multicolumn{2}{c}{$\Exp{\Delta_{PE}}$} & \multicolumn{2}{c}{$\Exp{\Delta_{Prop}}$}\\
\midrule
{}   & $n=6$   & $n=100$ & $n=6$   & $n=100$ & $n=6$   & $n=100$ & $n=6$   & $n=100$\\
\HopeOnline   &  \textbf{1.23} &  \textbf{2.16} &  0.059 & \textbf{0.11} &  0.35 &  0.14 &  0.057 &  0.012 \\
\HopeFull   &  1.37 &  4.68 &  0.076 &  0.26 &  0.38 &  0.15 &  0.076 &  0.021 \\
\EtOnline   &  1.47 &  3.57 &  0.091 & 0.20 &  0.38 &  0.22 &  0.090 &  0.070 \\
\EtFull   &  1.37 &  3.82 &  0.076 &  0.19 &  0.47 &  0.24 &  0.076 &  0.065 \\
\MaxMin   &  1.34 &  2.87 &  0.064 &  0.14 &  0.36 &  2.02 &  0.062 &  0.13 \\
\textsc{Greedy}   &  1.72 &  6.68 &  0.14 &  0.40 &  \textbf{0.28} &  \textbf{0.13} &  0.13 &  0.37 \\
\textsc{Adaptive-Threshold}   &  16.22 &  5.09 &  \textbf{0.00059} &  0.22 &  4.66 &  0.70 &  \textbf{0} &  \textbf{0} \\
\bottomrule
\end{tabular*}
\end{table*}

\section{Experiments}
\label{sec:experiments}
\begin{figure*}[!t]
  \begin{minipage}[c]{0.49\textwidth}
    \centering
    \includegraphics[width=\textwidth]{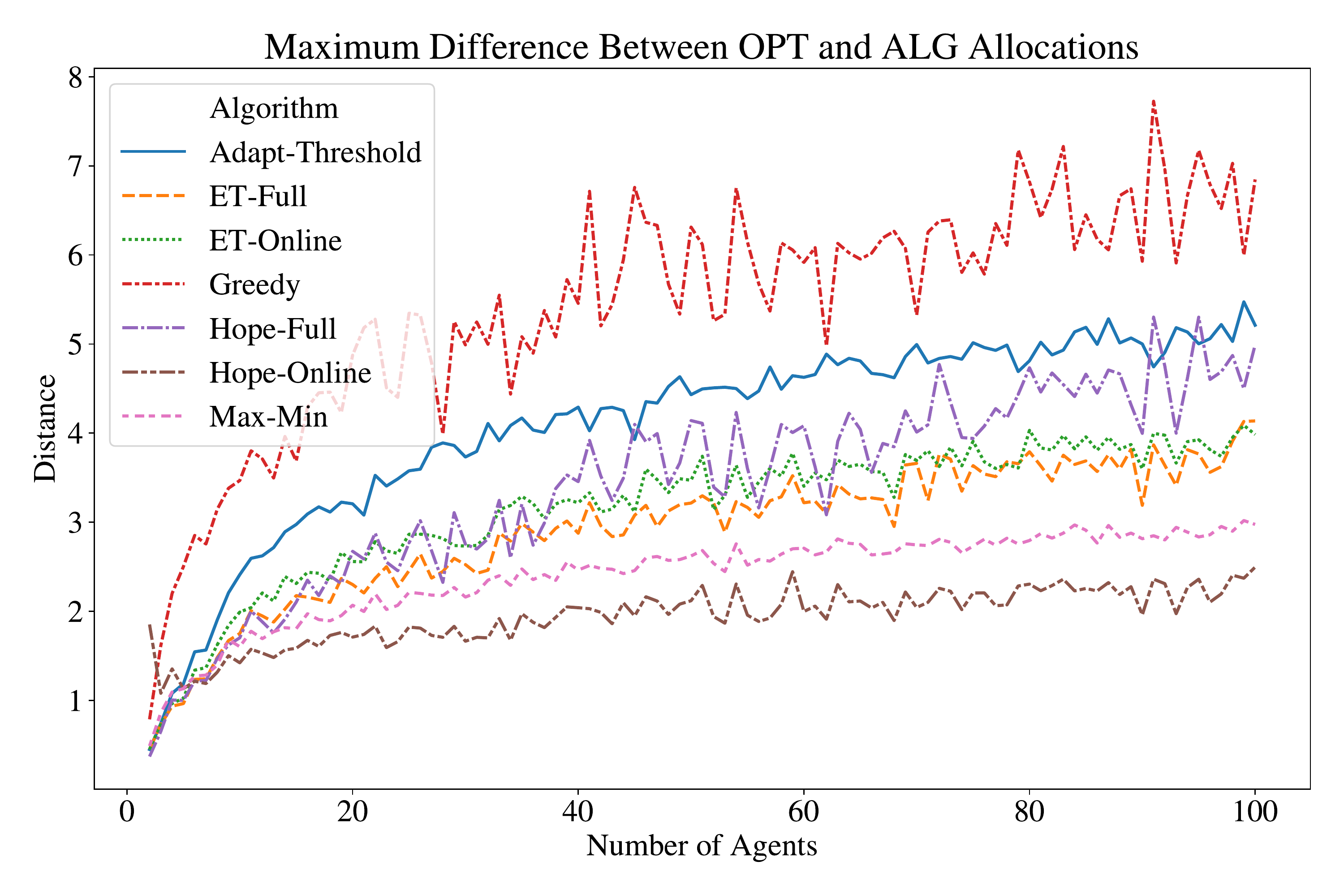}
    	\caption{}
 	\label{fig:regret}
  \end{minipage}
  \hfill
  \begin{minipage}[c]{0.49\textwidth}
    \centering
    \includegraphics[width=\textwidth]{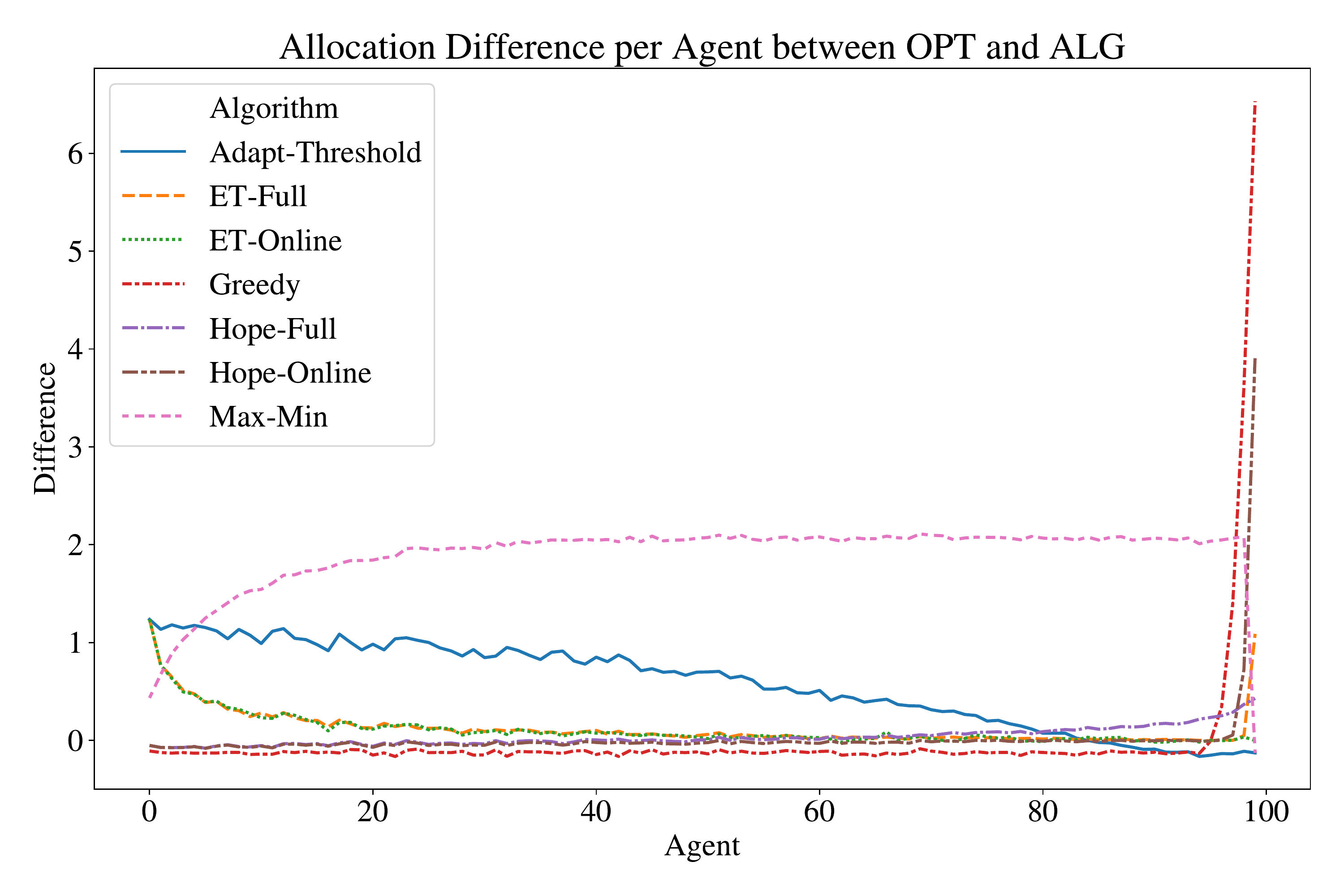}
    	\caption{}
 	\label{fig:allocation}
  \end{minipage}
  \caption*{
  (\cref{fig:regret}) Number of agents on $x$-axis, and $\Exp{\norm{X^{opt} - X^{alg}}_\infty}$ on y-axis;  (\cref{fig:allocation}) The $x$-axis corresponds to a specific agent $i$ in a simulation, and the $y$-axis corresponds to $\Exp{X_i^{opt} - X_i^{alg}}$, the per-agent deviation in the final allocation vs the optimal offline one. }
\end{figure*}

We now test our algorithms on both single and multiple resource allocation settings, with experiment parameters based on food-bank allocation data. 
For the full experimental details, details on the dataset, run-time analysis, and measures of variance of these results see~\cref{app:experiments}. 

In each experiment, we compare the allocations $X^{alg}$ given by the \HopeOnline \, algorithm to the NSW maximizing allocation in hindsight $X^{opt}$ (i.e. the solution to~\cref{eq:offline_nsw_type}). 
We report multiple performance measures, including the maximum allocation deviation $\Exp{\norm{X^{alg} - X^{opt}}_{max}}$, the agent-by-agent difference in allocations $\Exp{X_i^{alg} - X_i^{opt}}$, and expected approximate envy-freeness, Pareto-efficiency, and proportionality (see \cref{def:distance}). Finally, to benchmark the performance of \HopeOnline, we simulate several alternate heuristics, including the \HopeFull heuristic, as well as other optimization-based approaches discussed in \cref{app:full_heuristics}.

\subsection{Single Resource Allocation}
\label{sec:experiments_single}

We first consider a simple single-resource variant of the food-bank allocation problem motivated in~\cref{section: examples}.  A mobile food pantry loads up the truck at the start of the day with a fixed number of `meals' $B$, and travels sequentially from one drop-off location to the next. In each location $i$, they observe a demand $\theta_i \in \mathbb{R}_+$, make an allocation, and then proceed to the next food drop-off point. 
We consider filling-ratio utilities $u(X, \theta) = \min(\frac{X}{\theta}, 1)$.  These utility functions are of interest to food-banks as it serves as a common metric used in evaluating the effectiveness of a food bank.  In particular, the filling ratio a food bank is able to provide to each of the distribution sites often serves as one component of a `score' used in markets deciding how many resources a food bank receives~\cite{prendergast2017food}.

We start by including a structural result.  While the filling-ratio utilities chosen are not monotonically increasing (due to the $\min$), the Eisenberg-Gale program in \cref{eq:offline_nsw} still guarantees a fair allocation.  Moreover, the optimal solution can be characterized via a Waterfilling solution (see \cref{fig:allocation}).
Since \HopeOnline also allocates resources according to the solution to an optimization problem of a similar form (albeit with different weights, based on current available information), the resulting allocation also takes a waterfilling form, with the waterfilling threshold adjusted over time.
\begin{theorem}
\label{thm:single-resource}
An optimal solution to~\cref{eq:offline_nsw} for a fixed set of agent demands $\{\theta_i\}_{i \in [n]}$ is given by a waterfilling policy
$$X_i^{opt} = \min\{w_f, \theta_i, B\},$$ where the waterfilling threshold $w_f$ solves $$\min\left(B, \sum_{i=1}^n S_i \theta_i\right) = \sum_{i=1}^n S_i \theta_i \Ind{\theta_i \leq w_f} + S_i w_f \Ind{\theta_i \geq w_f}.$$
Moreover, this allocation is Pareto-efficient, envy-free, proportional, and hence fair.
\end{theorem}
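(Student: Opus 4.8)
The plan is to treat \cref{eq:offline_nsw} as a convex program and read the optimizer off its KKT conditions, then verify the three fairness properties on the resulting closed form. With a single resource ($K=1$) and filling-ratio utilities, the objective $\frac1S\sum_i S_i\log\min(X_i/\theta_i,1)$ is a concave function of $X\in\mathbb{R}_+^n$ (it is the composition of the concave map $X_i\mapsto\min(X_i/\theta_i,1)$ with the concave increasing $\log$), and the feasible set $\{X\ge 0:\sum_i S_iX_i\le B\}$ is convex, so KKT conditions are both necessary and sufficient. As a first reduction I would restrict to the box $X_i\le\theta_i$: since $u(X_i,\theta_i)$ is constant once $X_i\ge\theta_i$, any allocation can be clipped down to $X_i\le\theta_i$ without decreasing the objective and without violating the budget, so some optimizer lies in this box, where $\log u=\log X_i-\log\theta_i$ and the program becomes the weighted-log (Eisenberg--Gale) objective over $\sum_i S_iX_i\le B$, $0\le X_i\le\theta_i$.

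Next comes the waterfilling characterization. Introducing a multiplier $\lambda\ge 0$ for the budget and $\nu_i\ge 0$ for the caps $X_i\le\theta_i$, stationarity in $X_i$ reads $\frac{S_i}{S X_i}=\lambda S_i+\nu_i$. For an agent strictly below its cap ($\nu_i=0$) this gives $X_i=\tfrac1{S\lambda}$, while for an agent at its cap the sign condition $\nu_i\ge 0$ forces $\theta_i\le\tfrac1{S\lambda}$. Writing $w_f:=\tfrac1{S\lambda}$, both cases combine into $X_i=\min(w_f,\theta_i)$, and the extra $B$ in the stated $\min\{w_f,\theta_i,B\}$ is a generically inactive consequence of the single-agent budget bound. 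The threshold is pinned down by complementary slackness: if the budget binds then $\sum_i S_i\min(w_f,\theta_i)=B$, while if it is slack then $\lambda=0$, i.e. $w_f$ is large enough that every agent receives $\theta_i$ and $\sum_iS_i\theta_i<B$. Since $w\mapsto\sum_iS_i\min(w,\theta_i)$ is continuous, nondecreasing, and ranges from $0$ to $\sum_iS_i\theta_i$, the intermediate value theorem yields a $w_f$ solving $\sum_iS_i\min(w_f,\theta_i)=\min(B,\sum_iS_i\theta_i)$, which is exactly the equation in the statement (the two regimes collapsing into the $\min(B,\cdot)$ on the left).

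It then remains to verify fairness. Pareto-efficiency follows from optimality: all $X_i>0$, so every utility is strictly positive, and if some feasible $Y$ raised one agent's utility without lowering any other's, strict monotonicity of $\log$ would give $\sum_iS_i\log u(Y_i,\theta_i)>\sum_iS_i\log u(X_i,\theta_i)$, contradicting that $X$ maximizes \cref{eq:offline_nsw}; this is precisely the (cap-aware) PE of \cref{def:fairness}, and it is consistent with \cref{lem:pe-waste} since the budget is exhausted exactly in the binding regime. Envy-freeness and proportionality I would check directly from $u(X_i,\theta_i)=\min(w_f/\theta_i,1)$: since $X_j=\min(w_f,\theta_j)\le w_f$, we get $u(X_j,\theta_i)=\min(X_j/\theta_i,1)\le\min(w_f/\theta_i,1)=u(X_i,\theta_i)$, giving EF; and for proportionality, in the slack regime every utility equals $1$, while in the binding regime the budget equation forces $w_f\ge B/S$ (otherwise $\sum_iS_i\min(w_f,\theta_i)\le Sw_f<B$), so $X_i\ge\min(B/S,\theta_i)$ and monotonicity of $u$ yields $u(X_i,\theta_i)\ge u(B/S,\theta_i)$.

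The main obstacle is exactly the non-strict monotonicity introduced by the $\min$ in the filling-ratio utility: it is why \cref{prop:fair} cannot be invoked as a black box (it is stated for strictly increasing utilities) and why the argument instead proceeds via the explicit optimizer together with a direct, cap-aware verification of EF, PE, and Prop. Concretely, the care lies in the active-set bookkeeping of the KKT system (which agents sit at $X_i=\theta_i$ versus $X_i=w_f$) and in tracking the binding-versus-slack budget regimes, which is where the $\min(B,\sum_iS_i\theta_i)$ and the possibility of a non-exhausted budget enter; the underlying concavity, however, keeps everything within standard convex-analysis territory.
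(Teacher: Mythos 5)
Your proposal is correct, and its overall architecture matches the paper's own proof: characterize the optimizer of \cref{eq:offline_nsw} via Lagrangian/KKT conditions to obtain the waterfilling form $X_i = \min(w_f,\theta_i)$ (with the budget-slack regime handled separately), then verify envy-freeness and proportionality directly from the closed form, using $w_f \geq B/S$ for the latter exactly as the paper does. There are two local differences in execution worth noting. First, the paper attacks the nonsmooth objective head-on, writing subgradient KKT conditions for $\log\min(X_i/\theta_i,1)$ and splitting into cases on the budget multiplier $\mu$; you instead clip any allocation to the box $X_i \leq \theta_i$ (valid since utility is flat above $\theta_i$ and clipping preserves feasibility), which makes the objective smooth and lets you run textbook KKT with cap multipliers $\nu_i$ --- a cleaner bookkeeping that reaches the same threshold equation. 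Second, for Pareto-efficiency the paper first proves the no-waste identity $\sum_i S_i X_i^{opt} = \min(B,\sum_i S_i\theta_i)$ and then derives a contradiction from the explicit waterfilling structure, whereas you invoke the generic argument that a maximizer of $\sum_i S_i \log u(X_i,\theta_i)$ with all utilities strictly positive cannot be Pareto-dominated, since domination would strictly increase the objective. Your PE argument is both shorter and more general: it is essentially the proof of \cref{prop:fair} restricted to PE, and --- as you correctly observe --- that part does not require strict monotonicity of $u$ in $X$, only strict monotonicity of $\log$, so it remains valid for filling-ratio utilities even though \cref{prop:fair} itself cannot be cited as a black box. The paper's structural PE proof, in exchange, yields the no-waste identity explicitly, which it reuses when relating $\Delta_{PE}$ to waste.
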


\begin{table*}[!t]
\caption{ 
Comparison of fairness metrics averaged over 1000 simulations on the multiple-resource online allocation problem with linear utilities.  We plot the four metrics from \cref{def:online-fairness} and \cref{def:distance}.  Larger values corresponds to a lower score on that metric.} \label{tab:mult-fairness}
\setlength\tabcolsep{0pt} 
\footnotesize\centering
\begin{tabular*}{\textwidth }{@{\extracolsep{\fill}}r*4c}
\toprule
Algorithm &  {$\Exp{\norm{X^{opt} - X^{alg}}_{max}}$} & {$\Exp{\Delta_{EF}}$} & {$\Exp{\Delta_{PE}}$} & {$\Exp{\Delta_{Prop}}$}\\
\midrule
\HopeOnline   &  \textbf{0.0090} &  \textbf{0.039} &  \textbf{0.011} & 0.0060 \\
\HopeFull   &  0.019 & 0.16 &  0.024 & 0.0094 \\
\EtOnline   &  0.011 &  0.070 &  0.027 & \textbf{0.0027} \\
\EtFull   &  0.20 &  0.17 &  0.025 & 0.0094 \\
\bottomrule
\end{tabular*}
\end{table*}

\medskip \noindent
\textbf{Choice of Demand Distribution.} 
We perform several synthetic experiments, with parameters based on food bank demand data.
We present results for two simulations here, and defer others to the appendix.  The first simulation considered the performance of \HopeOnline with respect to the number of agents $n$.  
Here we model the demand distribution $\F_i$ as an i.i.d. discretized Gaussian distribution with mean $15$ and variance of $3$.  
The second simulation considers a more realistic scenario, where the number of agents $n = 6$ (corresponding to the six counties in the Southern Tier of New York) and a demand distribution as a discretized Gaussian with mean and variance generated from a demand histogram collected from a dataset on food demands by county.

In the experiments we set endowment variables $S_i = 1$ as the demand $\theta_i$ represents the `size' of an agent.  
We also set the budget as the expected sum of demands.
From a practical perspective, this is a typical choice made by food banks as to how many meals/boxes to prepare; theoretically, this is meaningful, as the optimal solution will give everyone an equal allocation if the budget is smaller than $n$ times the smallest possible demand, and give everyone their demand if the budget is larger than $n$ times the largest possible demand, both of which make the online problem trivial. The expected sum of demands is between these two bounds, forcing the online algorithms to make non-trivial decisions that can significantly impact their performances.

\medskip \noindent
\textbf{Heuristics.} We compare \HopeOnline to several natural online fair-allocation heuristics:
\begin{itemize}[nosep,leftmargin=*]
\smallskip
\item \textsc{Greedy}: every agent is given its demand $\theta_i$ until the budget is depleted.
\smallskip
\item \textsc{Adaptive-Threshold}: Allocate $X_i = \min(B_i / (n-i), \theta_i)$, the equal share of the remaining budget.
\smallskip
\item \MaxMin: The heuristic proposed in \cite{lien2014sequential} for optimizing the maxmin allocation. 
\smallskip
\item \HopeFull: Re-solves~\cref{eq:offline_nsw_type} over \emph{all} agents, using predictive-histogram for future agents (cf.~\cref{sec:approximation_algorithms}).
\smallskip
\item \EtOnline, \EtFull: Alternate model-predictive heuristics based on re-solving~\cref{eq:offline_nsw_type} with \emph{expected utility functions} (See~\cref{app:full_heuristics})
\end{itemize} 



\medskip \noindent
\textbf{Distance to Fair Allocation in Hindsight.} In~\cref{fig:regret} we compare $\Exp{\norm{X^{opt} - X^{alg}}_\infty}$, the difference in the allocation generated by the algorithm and the optimal fair solution in hindsight as we vary the number of agents $n$.  We see that \HopeOnline generates allocations close to the optimal solution in hindsight.  Unsurprisingly, the \textsc{Greedy} and \textsc{Adaptive-Threshold} algorithms perform worse with an increasing number of agents.  Thus \HopeOnline does well at ensuring approximate individual fairness guarantees (\cref{def:online-fairness}).  Moreover, for the $\ell_1$ norm between the allocations, our experiments show \HopeOnline has sub-linear dependence with $n$ (see~\cref{app:experiments}).

\medskip \noindent
\textbf{Group-Based Differences in Allocation.}  In~\cref{fig:allocation}, we study if the algorithms tend to under/over allocate based on the order of arrival. 
For this, we set $n = 100$, and compare $\Exp{X_i^{opt} - X_i^{alg}}$ for each agent $i = 1,\ldots, 100$ (averaged over $100$ replications).  \HopeOnline gives a uniform approximation to the offline NSW solution, with small performance degradation for later agents. 
In contrast, \MaxMin is too pessimistic and under-allocates to all agents.

\medskip \noindent
\textbf{Online Fairness Metrics.} In~\cref{tab:fairness} we compare each algorithm on the fairness metrics for a fixed number of agents (on the two problem set-ups described).  With these metrics we see that \HopeOnline performs competitively across a wide variety of distributions and problem set-ups.  Moreover, in~\cref{app:experiments} we show further simulation results, showing that \HopeOnline even outperforms \MaxMin on the metric of the minimum fill rate $(\Exp{\min_i u(X_i, \theta_i)})$, a metric \MaxMin was designed to maximize!

\subsection{Multiple Resource Allocation}

We next consider multiple resource allocation settings, with linear utility functions $u(x, \theta) = \langle \theta, x \rangle$. 
We use $n = 6$ agents, with sizes $S_i$ corresponding to population of the six counties of the Southern Tier of New York. 
For the utility functions, we use data from~\cite{prendergast2017food} on the `prices' on different resources in the non-monetary mechanism used to allocate resources to major food bank distribution centers; we consider a subsection of products presented: cereal, diapers, pasta, paper, prepared meals, rice, meat, fruit, and produce. 
To generate our preference distributions, we created eight different preference profiles, with each $\theta_k = \text{Bernoulli}(1/2) w_k$ where $w_k$ is the price of the product. The type distribution is chosen to be uniform over these types.

\medskip \noindent \textbf{Heuristics} We again test \HopeOnline against the alternative heuristics introduced before (note that \textsc{Greedy}, \textsc{Adaptive-Threshold} and \MaxMin do not extend to multiple resources).

An additional heuristic here is \textsc{Proportional} allocation, which assigns $X_i = B / S$ to each agent. By design, this allocation satisfies $\Delta_{PE}, \Delta_{EF},$ and $\Delta_{Prop} = 0$.  However, the allocation can be arbitrarily bad in terms of Pareto-efficiency, and so we omit it from the comparison here.

\medskip \noindent \textbf{Online Fairness Metrics.} In \cref{tab:mult-fairness} we compare each algorithm on the fairness metrics for a fixed number of agents $n = 6$ averaged over $100$ simulations.  With these results we again see that \HopeOnline performs competitively across each benchmark compared to the other algorithms.  Moreover, it performs the best in terms of minimizing the distance to the optimal allocation in hindsight, $\Exp{\norm{X^{opt} - X^{alg}}_{max}}$.

    \section{Conclusion}
\label{sec:conclusion}

In this paper we considered online allocation of divisible resources.  In the offline setting achieving a fair allocation scheme is found by maximizing the Nash Social Welfare objective.  In the online setting, however, we showed that no algorithm can achieve ex-post fairness almost surely.  In light of this, we defined a new notion of approximate fairness, where an approximately fair allocation algorithm is one which is close to the fair Nash Social Welfare maximizing solution in hindsight.  This definition is natural, as any approximately fair algorithm satisfies approximate counterparts to Pareto-efficiency, envy-freeness, and proportionality.

While the usefulness of this objective is not immediately apparent, we show that it leads to a simple algorithm \HopeOnline which approximates the offline solution by solving \textit{information-relaxed} versions of the Eisenberg-Gale program, where unknown quantities are replaced with their histogram.  Through experiments we show that \HopeOnline leads to allocations with fairness properties competitive to several benchmarks and prior work.

\srsedit{Although fairness in resource allocation is well-studied in the offline setting, fairness metrics for the sequential setting are very poorly understood.  Our proposed metrics (\cref{def:distance}) give a novel way to extend Varian's definitions to the sequential setting.  While we do not believe our work gives the final answer in defining fairness in sequential settings, we hope it starts a conversation on how to formally incorporate ethics and fairness constraints in sequential allocation problems.}
    
    \section*{Acknowledgements}
    Part of this work was done while Sean Sinclair and Christina Yu were visiting the Simons Institute for the Theory of Computing for the semester on the Theory of Reinforcement Learning. We also gratefully acknowledge funding from the NSF under grants ECCS-1847393, DMS-1839346, CCF-1948256, and CNS-1955997, the ARL under grant W911NF-17-1-0094, and the Cornell Engaged Grant: Applied Mathematics in Action.
    
    \bibliographystyle{plain}
    {\bibliography{references}}
    \appendix
    \newpage

    \section{Related Work (Detailed)}
\label{sec:related_work}

Fairness in resource allocation, and the use of Nash Social Welfare, was pioneered by Varian in his seminal work~\cite{varian1973equity,varian1976two}. Since then, researchers have investigated fairness properties for both offline and online allocation, in settings with divisible or indivisible resources, and when either the agents or resources arrive online. We now briefly discuss some related works; see~\cite{aleksandrov2019online} for a comprehensive survey. What distinguishes our setting from many of the previous works is that we consider the online Bayesian setting with a known distribution. Many previous works are either limited to offline or non-adaptive algorithms, or consider adversarial online arrivals.


\medskip

\noindent \textbf{Food Bank Operations}: There is a growing body of work in the operations research literature addressing logistics and supply chain issues in the area of humanitarian relief and food distribution~\cite{sengul2017modeling,orgut2016achieving}.  The research focuses on designing systems which balance efficiency, effectiveness, and equity.  In \cite{eisenhandler2019humanitarian} they study the logistical challenges of managing vehicles with limited capacity to distribute food and provide routing and scheduling protocols.  In \cite{lien2014sequential} they consider sequential allocation with an alternative objective of maximizing the minimum utility (also called the leximen in the literature \cite{moulin2004fair}).  We instead consider sequential allocation of resources under the Nash Social Welfare objective to obtain equitable allocations~\cite{varian1973equity}.

\medskip

\noindent \textbf{Cake Cutting}: Cake cutting serves as a model for dividing a continuous object (whether that be a cake, advertisement space, land, etc)~\cite{brams1995envy,procaccia2013cake}.  Under this model, prior work considers situations where individuals arrive and depart during the process of dividing a resource, where the utility of an agent is a set-function on the interval of the resource received.  Researchers analyze the offline setting to develop algorithms to allocate the resource with a minimal number of cuts \cite{brams1996fair}, or online under adversarial arrivals~\cite{walsh2011online}.  Our model imposes stochastic assumptions on the utilities for arriving agents and characterizes probabilistic instead of sample-path fairness criteria.

\medskip

\noindent \textbf{Online Resources}: One line of work considers the resource (here to be thought of as the units of food, processing power, etc) are online and the agents are fixed~\cite{benade2018make,aleksandrov2015online,mattei2018fairness,mattei2017mechanisms,aleksandrov2019monotone}.  In \cite{zeng2019fairness} they study the tradeoffs between fairness and efficiency when items arrive adversarially.  Another common criteria is designing algorithms which are \textit{envy-free up to one item}, where researchers design algorithms that can reallocate previously allocated items, but try to minimize these adjustments~\cite{ijcai2019-49,aziz2016control}.

\medskip

\noindent \textbf{Online Agents}: The other setting considers agents as arriving online and the resources as fixed.  In \cite{kalinowski2013social} they consider this setting where the resources are indivisible with the goal of maximizing utilitarian welfare (or the sum of utilities) which provides no guarantees on envy-freeness.  Another approach in \cite{gerding2019fair} considers a scheduling setting where agents arrive and depart online.  Each agent has a fixed and known arrival time, departure time, and demand.  The goal then is to determine a schedule and allocation which is Pareto-efficient and envy-free.  We instead consider a stochastic setting where each agent has a distribution on utilities and seek algorithms which satisfy probabilistic versions of Pareto-efficiency and envy-freeness.

\medskip

\noindent \textbf{Non-adaptive Allocations}:  A separate line of work considers fairness questions for resource allocations in a similar setting where the utilities across groups are drawn from known probability distributions \cite{donahue2020fairness,elzayn2019fair}.  This  line  of  work  investigates  probabilistic versions of fairness, where the goal is to quantify the discrepancy between the objectives of ensuring the expected utilization of the resources is large (ex-ante Pareto-optimal), while the probability of receiving the resource is proportional across groups (ex-ante proportional).  However, they consider algorithms which decide on the entire allocation for each agent upfront before observing the utilities for any individuals rather than adaptive policies.

\medskip

\noindent \textbf{Adaptive Allocations}: In contrast, we consider a model where the principal makes decisions on how much of the resource to allocate after witnessing an agent's type, where all future types are unknown.   Most similar to our work is recent work analyzing a setting where agents arrive over time and do not depart, so that the algorithm can allocate additional resources to agents who arrived in the past \cite{kash2014no}.  We instead consider a stochastic setting where agents arrive and depart in the same step with the goal of characterizing allocations that cannot reallocate to previous agents.  Other papers either seek competitive ratios in terms of the Nash Social Welfare objective~\cite{azar2010allocate,bateni2018fair,gorokh2020fair}, or derive allocation algorithms which perform well in terms of max-min~\cite{lien2014sequential}.
    \section{Discussion on Varian's Fairness}
\label{app:varian}

\subsection{Limitation of Fair Allocations}

Economists, computer scientists, and people in the operations research literature have become increasingly interested in questions of fairness~\cite{sugden1984fairness,varian1973equity,varian1976two}.  One particular concept of fairness which has gained wide circulation due to its ease in compatibility is the so-called notion of `Varian fairness' pioneered by Hal Varian in the 1970s taken in \cref{def:fairness}.  This theory of fairness provides three criteria for judging a given allocation of resources: \textit{envy-freeness}, \textit{Pareto-efficiency}, and \textit{proportionality}, all defined with respect to utilities an agent has for different allocations.  These criteria serve as more of a classification than an optimization perspective, as each of them merely provides a true/false criteria for an allocation to \textit{satisfy} fairness rather than a way an allocation can \textit{approach} fairness.  Numerous other researchers have proposed other definitions of fairness, including $\alpha$-fairness obtained by instead maximizing~\cite{moulin2004fair,arrow2012social}:
$$\sum_{i=1}^n \text{sign}(\alpha)u(X_i, \theta_i)^{\alpha}.$$
In this definition, taking $\alpha = 1$ recovers utilitarian welfare, or maximizing the sum of utilities.  Taking $\alpha \rightarrow - \infty$ also recovers the leximin objective, and $\alpha \rightarrow \infty$ the leximax objective.

One primary critique of `Varian fairness' is that a Varian fair allocation may not exist at all.  Moreover, the implication is that \textit{if} a Varian-fair allocation exists, then it has special merit.  While we specifically consider settings where a `Varian fair' allocation always exists (and is remarkably found as a result of optimizing the Nash Social Welfare objective), it is important to consider some of the several downsides of this model.

\medskip \noindent \textbf{Comparison of Individuals}: Paramount to Varian's definition of proportionality and envy-freeness is that each agent is treated symmetrically.  This ignores systemic factors that inhibit particular individual's access to the resource. 

\medskip \noindent \textbf{Scale Invariance}: The concept of fairness is strictly operational, in the sense that it requires no more information than what is contained in an agent's utility function.  Settings like matching students to local schools via school choice require definitions which measure the `utility of replacement'~\cite{abebe2020roles}.  As an example, a student with preferences (School A, School B, School C) in descending order gets matched to School B.  How can we measure the overall gain to society when the student is instead matched to School A, or School C in comparison to another students list of preferences?  In Varian's definition of fairness, utility functions are only used to exhibit an ordering on preferences, rather than a relative value on different outcomes.

\medskip

We believe the settings considered in \cref{section: examples} are well suited to Varian's model on fairness.  In the example motivated with the Food Bank of the Southern Tier of New York, agents correspond to individual distribution sites, whether that be a soup kitchen, a drop-off location for the mobile food bank, etc.  In these settings, locations have use for all resources with strictly increasing utility with respect to the resource allocated.  This motivates using scale invariant measures, as every agent will be able to use all available food allocated to them.  Considering processor assignment in cloud computing platforms, each individual request coming in should be treated independently and symmetrically.

\subsection{Competitive Ratio or Individual Guarantees}
\label{app:ce}
One approach on obtaining fairness guarantees for an online algorithm could be in the form of a competitive ratio.  These results find allocation algorithms $X^{alg}$ to which you can construct a bound on the competitive ratio for the Nash Social Welfare (or its logarithm):
\begin{align*}
\quad\quad \frac{\prod_{i=1}^n u(X_i^{alg}, \theta_i)}{\prod_{i=1}^n u(X_i^{opt}, \theta_i)} \quad\quad \text{ or }  \frac{\sum_{i=1}^n \log(u(X_i^{alg}, \theta_i))}{\sum_{i=1}^n \log(u(X_i^{opt}, \theta_i))} .
\end{align*}
While theoretically interesting as the typical Nash Social Welfare (or logarithm in the Eisenberg Gale program) are of a different form than typical competitive ratio guarantees done in computer science, these results provide no immediate individual fairness guarantees.  The motivation for the Eisenberg-Gale program arises from the fact that \textit{fairness is a byproduct}.  To some extent, the actual objective value of an allocation is meaningless, and the objective is only taken as it serves as a proxy to obtain fair allocations.  Ensuring a good competitive ratio has no direct guarantees on individual fairness.  In many applications of resource allocation, stakeholders are more interested in obtaining individualized guarantees than global guarantees on social welfare.  This motivated our alternative approach of designing algorithms with individualized guarantees in mind.
    \section{Heuristic Algorithms}
\label{app:full_heuristics}

Here we describe the other heuristic algorithms considered in \cref{sec:experiments}, and compare our approaches to similar approaches developed in \cite{lien2014sequential} for the max-min objective.

\subsection{Expected Type - Online and Full}
\label{app:et-algo}

A second derivation of heurstic algorithms is to notice that in many practical applications, the agent types $\theta \in \Theta$ are in $\mathbb{R}^{K}$, corresponding to `preferences' for each item.  One simple heuristic approach in designing online allocation schemes is to replace $\theta_i$ in the offline Nash Social Welfare optimization problem (\cref{eq:offline_nsw}) with its expectation $\Exp{\theta_i}$.  This leads to two simple heuristic algorithms which use the \textbf{E}xpected \textbf{T}ypes and either the \textbf{Online} or \textbf{Full} information over previous agents.

In particular, the \EtOnline algorithm at every iteration $i$, observes the type $\theta_i$ and resolves the Nash Social Welfare objective with the current available resources and future agents types are replaced with their expectation.  In particular, at every iteration $i$, the algorithm allocates $X_i^{alg} = X_i$ according to the solution to:
\begin{align*}
\max_{X \in \mathbb{R}_+^{n-i \times K}} & \frac{1}{S} \sum_{j=i}^n S_j (\log(u(X_j, \Exp{\theta_j})) \Ind{j > i} + \log(u(X_i, \theta_i))\Ind{j=i}) \\\text{ s.t. } & \,\sum_{j=i}^n S_j X_j \leq B^i
\end{align*}
where $B^i = B^{i-1} - X_{i-1}^{alg}$ is the current available budget taking into account allocations already committed in previous iterations.

A similar approach would be the \EtFull algorithm that mimics \HopeFull by utilizing all of the prior observed types in designing an allocation.  In particular, this heuristic allocates $X_i^{alg} = \min(X_i, B^i)$ where $X_i$ is the solution to:
\begin{align*}
\max_{X \in \mathbb{R}_+^{n \times K}} & \frac{1}{S} \sum_{j=1}^n S_j (\log(u(X_j, \Exp{\theta_j})) \Ind{j > i} + \log(u(X_i, \theta_j))\Ind{j \leq i}) \\\text{ s.t. } & \,\sum_{j=1}^n S_j X_j \leq B
\end{align*}

Similar to \HopeFull, this solves the exact same optimization problem as the offline optimal fair solution \cref{eq:offline_nsw_type} for the last agent.

One downside to both \EtOnline and \EtFull is that the expected type will not necessarily be in the support of the distribution.  This negatively impacts the allocation returned by the algorithm since the offline Nash Social Welfare solution will only take into account observed types which fall into the support of the distributions.  As observed in \cref{sec:experiments} these algorithms have worse performance compared to \HopeFull and \HopeOnline.

\subsection{Max-Min Allocation}
\label{app:max-min-algo}

Here we provide a brief explanation for the heuristic allocation algorithm for a single-resource and filling ratio utilities from \cite{lien2014sequential}.  This algorithm was set-up to approximate the optimal solution to the max-min objective:
$$\max_{X \in \mathbb{R}^{n}} \Exp{\min_{i \in [n]} u(X_i, \theta_i)}$$
which aims to uniformly (across all agents) maximize the filling ratio.
Now the measure of performance for these algorithms is defined as $\Delta_{MM} = \min_{i} u(X_i^{alg}, \theta_i)$.  The \MaxMin heuristic (called Two Node Decomposition Heuristic Algorithm in \cite{lien2014sequential}) arises from a specific form of the dynamic programming solution to the two-agent problem.  In particular, the heuristic algorithm proceeds in three phases.

\begin{itemize}
    \item \textit{Decomposition}: The $n$ agent resource allocation problem is decomposed into a sequence of two agent allocation problems.
    \item \textit{Supply Allotment}:  For each two agent problem, the available budget is divided so that each portion is utilized to solve the different two-agent problems, with the rest saved for the agents who are yet to be visited.  The allotment $\hat{B^i}$ for a two agent problem for agents $i$ and $i+1$ is calculated by $$\hat{B^i} = B^i \frac{\mu_i + \mu_{i+1}}{\sum_{j=i}^n \mu_j}$$
    where $B^i$ is the current budget remaining and $\mu_i = \Exp{\theta_i}$.
    \item \textit{Resource Allocation}: For each two node problem consisting of agent $i$ and agent $i+1$ the allocation being made is a threshold policy as follows:
    \begin{align*}
        w_f^i & = \hat{B^i} \frac{\theta_i}{\theta_i + \tilde{m}_{i+1} + \delta_{i+1}\sqrt{\sigma_{i+1}}} \\
        X_i & = \min(w_f^i, \beta_{min}^{i-1}\theta_i)
    \end{align*}
    where $\tilde{m}_i = \text{Median}(\theta_i)$, $\sigma_i = \text{Variance}(\theta_i)$ and $\delta_i = \frac{\tilde{m}_i - \tilde{m}_{i+1}}{(\tilde{m}_i + \tilde{m}_{i+1})/2}$ and $\beta_{min}^i = \min_{j < i} u(X_j^{alg}, \theta_j)$ is the minimum fill rate thus far.
\end{itemize}

We use this algorithm as a benchmark in \cref{sec:experiments} since it considers a similar problem set-up and utility function but with a different objective.  We compare our algorithms to \MaxMin, including the metric $\Exp{\Delta_{MM}}$ to which this algorithm was constructed for in \cref{app:experiments}.
    \section{Omitted Proofs}
\label{app:proofs}
In this section we include the omitted proofs from the main paper.  We restate each of them here for ease of presentation.

\subsection{Approximate Fairness}

\begin{lemma}[\cref{lemma:e-close}]
Suppose that an algorithm $X^{alg}$ satisfies $\Exp{\norm{X^{opt} - X^{alg}}_{max}} \leq \epsilon$.  Then we have
\begin{itemize}
\item \textit{Approximate Envy-Freeness}:
$$\Exp{\Delta_{EF}} = \Exp{\max_{i,j} \left( u(X_j^{alg}, \theta_i) - u(X_i^{alg}, \theta_i)\right)} \leq {2L \epsilon}.$$
\item \textit{Approximate Pareto-Efficiency}: $$\Exp{\Delta_{PE}} = \frac{1}{n}\Exp{\max_k(B_k - \sum_{i} S_i X_{i,k}^{alg})} \leq \frac{S}{n}\epsilon$$
\item \textit{Approximate Proportionality}: $$\Exp{\Delta_{Prop}} = \Exp{\max_i \left( u(B/S, \theta_i) - u_i(X_i^{alg}, \theta_i)\right)} \leq L \epsilon$$
\end{itemize}
\end{lemma}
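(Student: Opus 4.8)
The plan is to exploit the single structural fact behind the whole reformulation: by \cref{prop:fair} the hindsight NSW allocation $X^{opt}$ is itself fair, so it is envy-free, proportional, and (via \cref{lem:pe-waste}) wastes no budget. Consequently each of the three distances in \cref{def:distance} is nonpositive when evaluated at $X^{opt}$. For each metric I would therefore write its value at $X^{alg}$ as (its value at $X^{opt}$, which is $\le 0$) plus an error term controlled by $\norm{X^{opt}-X^{alg}}_{max}$, and finish by applying monotonicity and linearity of expectation to the resulting pathwise bound.

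For the envy-freeness bound, fix agents $i,j$ and telescope through the optimal allocation: write $u(X^{alg}_j,\theta_i)-u(X^{alg}_i,\theta_i)$ as the sum of $[u(X^{alg}_j,\theta_i)-u(X^{opt}_j,\theta_i)]$, $[u(X^{opt}_j,\theta_i)-u(X^{opt}_i,\theta_i)]$, and $[u(X^{opt}_i,\theta_i)-u(X^{alg}_i,\theta_i)]$. The middle bracket is $\le 0$ because $X^{opt}$ is envy-free, while each outer bracket is at most $L\norm{X^{opt}-X^{alg}}_{max}$ by $L$-Lipschitzness of $u(\cdot,\theta_i)$. Taking the max over $i,j$ and then expectation gives $\Exp{\Delta_{EF}}\le 2L\epsilon$. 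Proportionality is the same argument with only two brackets, $u(B/S,\theta_i)-u(X^{alg}_i,\theta_i)=[u(B/S,\theta_i)-u(X^{opt}_i,\theta_i)]+[u(X^{opt}_i,\theta_i)-u(X^{alg}_i,\theta_i)]$, where the first is $\le 0$ by proportionality of $X^{opt}$ and the second is $\le L\norm{X^{opt}-X^{alg}}_{max}$, yielding $\Exp{\Delta_{Prop}}\le L\epsilon$.

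The Pareto-efficiency bound is the one case that does not use Lipschitzness, and I would isolate it. Here I invoke \cref{lem:pe-waste}: since $X^{opt}$ is Pareto-efficient it exhausts the budget, so $\sum_i S_i X^{opt}_{i,k}=B_k$ for every resource $k$. Then $B_k-\sum_i S_i X^{alg}_{i,k}=\sum_i S_i(X^{opt}_{i,k}-X^{alg}_{i,k})\le \sum_i S_i\norm{X^{opt}-X^{alg}}_{max}=S\norm{X^{opt}-X^{alg}}_{max}$, using $S=\sum_i S_i$. Dividing by $n$, taking the max over $k$, and then expectation gives $\Exp{\Delta_{PE}}\le \tfrac{S}{n}\epsilon$.

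The computation is routine, so the only points needing care are bookkeeping rather than genuine obstacles. The first is applying the Lipschitz estimate against the right norm: each per-agent vector satisfies $\max_k|X^{opt}_{i,k}-X^{alg}_{i,k}|\le\norm{X^{opt}-X^{alg}}_{max}$, so $L$-Lipschitzness transfers cleanly to the matrix max-norm. The second is recognizing that the factor of $2$ in the envy bound arises because both agents' allocations are perturbed, whereas proportionality perturbs only one; and that the Pareto-efficiency constant $S/n$ comes from the budget identity rather than from continuity of $u$. I expect no difficulty beyond these.
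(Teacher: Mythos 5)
Your proposal is correct and follows essentially the same route as the paper's own proof: the same telescoping decompositions through $X^{opt}$, the same use of its envy-freeness and proportionality (via \cref{prop:fair}) to drop the middle terms, the same invocation of \cref{lem:pe-waste} plus the budget identity for the Pareto-efficiency bound, and the same Lipschitz estimates against the max-norm. Your bookkeeping is in fact slightly cleaner than the paper's (which has a harmless sign slip in the Pareto-efficiency expansion), but the argument is identical in substance.
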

\begin{proof}

Notice that for any agent $i$, by assumption on the utility functions being $L$-Lipschitz continuous we have that $$|u(X_i^{alg}, \theta_i) - u(X_i^{opt}, \theta_i)| \leq L \norm{X_i^{opt} - X_i^{alg}}_\infty \leq L \norm{X^{opt} - X^{alg}}_{max}.$$

Each of the three properties follows then by the fact that $X^{opt}$ is the optimal fair solution in hindsight.

\medskip

\noindent \textit{Approximate Pareto-Efficiency}: Notice that $X^{opt}$ is Pareto-efficient, and so by~\cref{lem:pe-waste} we have that for any resource $k$, $B_k = \sum_{i=1}^n S_i X_{i,k}^{opt}$.  Thus we have that
\begin{align*}
    B_k - \sum_{i=1}^n S_i X_{i,k}^{alg} & = B_k - \sum_{i=1}^n S_i X_{i,k}^{opt} + \sum_{i=1}^n S_i (X_{i,k}^{alg} - X_{i,k}^{opt}) \\
    & = \sum_{i=1}^n S_i(X_{i,k}^{alg} - X_{i,k}^{opt}) \\
    & \leq \sum_{i=1}^n S_i \norm{X^{alg} - X^{opt}}_{max} \\
    & \leq S \norm{X^{alg} - X^{opt}}_{max}.
\end{align*}
Thus we get by taking the maximum over $k$ that $$\Exp{\Delta_{PE}} = \frac{1}{n} \Exp{\max_{k \in [K]} (B_k - \sum_{i=1}^n S_i X_{i,k}^{alg})} \leq \frac{S}{n} \epsilon.$$

\medskip
	 
\noindent \textit{Approximate Proportionality}:  By adding and subtracting $u(X_i^{opt}, \theta_i)$ and using the fact that $X^{opt}$ is proportional so $u(X_i^{opt}, \theta_i) \geq u(B/S, \theta_i)$ we get:
\begin{align*}
u(B/S, \theta_i) - u(X_i^{alg}, \theta_i) & = u(B/S, \theta_i) - u(X_i^{opt}, \theta_i) + u(X_i^{opt}, \theta_i) - u(X_i^{alg}, \theta_i) \\
& \leq u(X_i^{opt}, \theta_i) - u(X_i^{alg}, \theta_i) \\
& \leq L \norm{X^{opt} - X^{alg}}.
\end{align*}
Since this inequality is true almost surely for any $i$ we can take the maximum of the left hand side over $i$ and the expectation of both to show that $\Exp{\Delta_{Prop}} \leq L \epsilon$.
	 
	 \medskip
	 
\noindent \textit{Approximate Envy-Freeness}: Again by adding and subtracting and noting that for any $j$ as the optimal solution is envy-free $u(X_j^{opt}, \theta_i) \leq u(X_i^{opt}, \theta_i)$ for any $j$ and $i$ we have that
 \begin{align*}
  u(X_j^{alg}, \theta_i) - u(X_i^{alg}, \theta_i) & = u(X_j^{alg}, \theta_i) - u(X_j^{opt}, \theta_i) + u(X_j^{opt}, \theta_i) - u(X_i^{opt}, \theta_i) \\
  & \quad + u(X_i^{opt}, \theta_i) - u(X_i^{alg}, \theta_i) \\
  & \leq u(X_j^{alg}, \theta_i) - u(X_j^{opt}, \theta_i) + u(X_i^{opt}, \theta_i) - u(X_i^{alg}, \theta_i)\\
  & \leq 2L\norm{X^{alg} - X^{opt}}_{max}.
 \end{align*}
 As this is true for any $i$ and $j$, we take the maximum over all $i$ and $j$ on the left hand side and expectations to show that $\Exp{\Delta_{EF}} \leq 2L \epsilon$.
\end{proof}

\subsection{Distance to Pareto-Efficiency}

From a practical perspective, measuring the distance to Pareto-efficiency is not straightforward.  One important proxy in the context of food bank resource allocation is \textit{waste}, defined as $B - \sum_{i=1}^n S_i X_i$ under an allocation $X$.  This is the vector of leftover unallocated resources by an allocation.  
\begin{proposition}[\cref{lem:pe-waste}]
\label{applem:pe-waste}
If an allocation $X \in \mathbb{R}^{n \times K}$ is Pareto-efficient, then we have that $\sum_{i} S_i X_i = B$.
\end{proposition}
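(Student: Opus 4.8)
The plan is to argue by contradiction, exploiting the assumption (from the model) that each utility function $u(\cdot,\theta)$ is strictly increasing in the allocation. Suppose $X$ is Pareto-efficient yet $\sum_i S_i X_i \neq B$. Since every feasible allocation satisfies $\sum_i S_i X_i \leq B$ componentwise, the failure of equality forces strict inequality in at least one resource $k^\ast \in [K]$; set $\delta = B_{k^\ast} - \sum_i S_i X_{i,k^\ast} > 0$, the waste in that coordinate.

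Next I would construct an explicit Pareto improvement. Fix any agent, say agent $1$, and define a new allocation $Y$ that hands it all the leftover in coordinate $k^\ast$: set $Y_1 = X_1 + (\delta / S_1)\, e_{k^\ast}$, where $e_{k^\ast}$ is the $k^\ast$-th standard basis vector, and $Y_i = X_i$ for every $i \neq 1$. By construction $\sum_i S_i Y_{i,k^\ast} = \sum_i S_i X_{i,k^\ast} + \delta = B_{k^\ast}$, while all other coordinates are unchanged, so $Y$ remains feasible, i.e. $\sum_i S_i Y_i \leq B$. Note also that $\delta > 0$ guarantees $Y \neq X$, as required by the Pareto-efficiency definition.

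Finally I would invoke strict monotonicity: since $Y_1 \geq X_1$ with a strict increase in coordinate $k^\ast$, we get $u(Y_1, \theta_1) > u(X_1, \theta_1)$, whereas $u(Y_i, \theta_i) = u(X_i, \theta_i)$ for all $i \neq 1$. Thus $Y$ strictly improves agent $1$ without lowering any other agent's utility, directly contradicting the Pareto-efficiency of $X$ as formalized in \cref{def:fairness}. Hence no coordinate can carry strict slack and $\sum_i S_i X_i = B$.

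The argument is short, and the only routine points requiring care are checking that the perturbed allocation $Y$ stays within budget and confirming that the strictly-increasing hypothesis is genuinely the lever — it is precisely what rules out ``wasteful'' Pareto-efficient allocations. The one genuine obstacle to a fully uniform statement is the filling-ratio utility $u(X,\theta)=\min(X/\theta,1)$, which is not strictly increasing past the demand level: there, leftover resources cannot help an already-saturated agent, so the identical conclusion needs the mild extra hypothesis that the budget cannot simultaneously meet every agent's demand. This caveat is presumably why those utilities are handled by a dedicated argument in the appendix rather than subsumed here.
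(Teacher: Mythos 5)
Your proof is correct and follows essentially the same route as the paper's: a contradiction argument that assigns the slack in a wasted resource to a single agent and invokes strict monotonicity to produce a Pareto improvement. If anything, your construction is slightly more careful than the paper's, since you correctly scale the perturbation by $1/S_1$ so that the normalized allocation stays within budget, and your closing caveat about filling-ratio utilities matches the paper's decision to treat those via a separate waterfilling argument in the appendix.
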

\begin{proof}
Suppose for sake of contradiction we have an allocation $X \in \mathbb{R}^{n\times k}$ which is Pareto-efficient, but there exists a resource $k$ such that $\sum_{i} S_i X_{i,k} < B_k$.  Consider a new allocation $Y$ such that $Y_{i,k} = X_{i,k} + B_k - \sum_{i} S_i X_{i,k}$ for any agent $i$ with strictly increasing utility for resource $k$.  Then $Y$ is still a valid allocation, and agent $i$ has strictly higher utility under this allocation $(\Rightarrow \Leftarrow)$
\end{proof}

An alternative expression would be to define the distance to Pareto-efficiency in terms of the maximum difference between the allocation considered and a Pareto-efficient one.  In particular we have:
\begin{definition}
Given a set of agent types $\{\theta_i\}_{i \in [n]}$ and utility functions $\{u(X, \theta)\}_{\theta \in \Theta}$ and budget $B$ we define the set of Pareto-efficient allocations as:
$$ PE(\{\theta_i\}_{i \in [n]}, B) = \{ X \in \mathbb{R}^{n \times K} \mid X \text{ is Pareto efficient}\}.$$
With this, the alternative distance to Pareto-efficiency for an allocation $X^{alg}$ is defined as $$\overline{\Delta_{PE}} = \min_{Y \in PE(\{\theta_i\}_{i \in [n]}, B)} \norm{X^{alg} - Y}_{max}.$$
\end{definition}
Note that under this definition, both the allocation $X^{alg}$ given by the algorithm is a random variable (depending on randomness in the algorithm), but also the set $PE(\{\theta_i\}_{i \in [n]}, B)$ (as its definition depends on the realized types for each agent).  With this, we notice that our definition of an $\epsilon$-optimal allocation algorithm also satisfies $\epsilon$ guarantees in terms of this alternative definition for distance to Pareto-efficiency.  In particular we have
\begin{lemma}
Suppose that the algorithm $X^{alg}$ satisfies $\Exp{\norm{X^{alg} - X^{opt}}_{max}} \leq \epsilon$.  Then we also have that $$\Exp{\overline{\Delta_{PE}}} \leq \epsilon.$$ 
\end{lemma}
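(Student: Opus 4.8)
The plan is to exploit the fact that the minimization defining $\overline{\Delta_{PE}}$ ranges over \emph{all} Pareto-efficient allocations, and that we already have one such allocation in hand, namely $X^{opt}$. First I would recall that $X^{opt}$ is defined as the NSW-maximizing solution in hindsight (the solution to~\cref{eq:offline_nsw}). By~\cref{prop:fair}, any allocation maximizing the Nash Social Welfare is Pareto-efficient, envy-free, and proportional. In particular, $X^{opt}$ is Pareto-efficient for the realized types $\{\theta_i\}_{i \in [n]}$, and therefore $X^{opt} \in PE(\{\theta_i\}_{i \in [n]}, B)$ almost surely.

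Next, since $\overline{\Delta_{PE}}$ is an infimum of $\norm{X^{alg} - Y}_{max}$ over $Y \in PE(\{\theta_i\}_{i \in [n]}, B)$, and $X^{opt}$ is a feasible choice of $Y$, I would simply bound the infimum by the value attained at $Y = X^{opt}$:
$$\overline{\Delta_{PE}} = \min_{Y \in PE(\{\theta_i\}_{i \in [n]}, B)} \norm{X^{alg} - Y}_{max} \leq \norm{X^{alg} - X^{opt}}_{max}.$$
This inequality holds pathwise, i.e., for every realization of the types and any internal randomness of the algorithm, because on each such realization $X^{opt}$ lies in the corresponding feasible set.

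Finally, I would take expectations of both sides and invoke the hypothesis $\Exp{\norm{X^{alg} - X^{opt}}_{max}} \leq \epsilon$ to conclude $\Exp{\overline{\Delta_{PE}}} \leq \epsilon$. There is no real obstacle here: the only point requiring care is verifying that $X^{opt}$ genuinely belongs to the (random) feasible set $PE(\{\theta_i\}_{i \in [n]}, B)$ on every sample path, which is exactly the content of~\cref{prop:fair}. Once that membership is established, the result is immediate from monotonicity of the minimum and of expectation.
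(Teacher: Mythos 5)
Your proposal is correct and matches the paper's own proof essentially verbatim: both arguments observe that $X^{opt}$ is Pareto-efficient for the realized types (hence lies in the random set $PE(\{\theta_i\}_{i \in [n]}, B)$ on every sample path), bound the minimum defining $\overline{\Delta_{PE}}$ by the value at $Y = X^{opt}$, and take expectations. Your version is slightly more explicit in citing \cref{prop:fair} for the membership step, but the route is the same.
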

\begin{proof}
This follows from noting that the optimal solution $X^{opt}$ is Pareto efficient based on the realized types and so belongs in $PE(\{\theta_i\}_{i \in [n]}, B)$.  Thus we have that $\overline{\Delta_{PE}} \leq \norm{X^{alg} - X^{opt}}$.  The result follows from taking the expectation of both sides.
\end{proof}

\subsection{Filling-Ratio Utilities}

In this section we specialize the previous results to the single-resource case with filling ratio utilities $u(X, \theta) = \min(\frac{X}{\theta}, 1)$.  Unfortunately, as these utility functions are not strictly increasing many of the previous proofs do not directly follow under these utility functions.  However, we show respective definitions and results which follow from efficient solutions to the Eisenberg-Gale program with homogeneous concave utility functions~\cite{roughgarden2010algorithmic}.

\begin{theorem}[\cref{thm:single-resource_app}]
\label{thm:app_offline}
An optimal solution to~\cref{eq:offline_nsw} for a fixed set of agent demands $\{\theta_i\}_{i \in [n]}$ is given by a waterfilling threshold solution, where the allocation $$X_i^{opt} = \min\{w_f, \theta_i, B\}$$ and the waterfilling threshold $w_f$ solves $$\min\left(B, \sum_{i=1}^n S_i \theta_i\right) = \sum_{i=1}^n S_i \theta_i \Ind{\theta_i \leq w_f} + S_i w_f \Ind{\theta_i \geq w_f}.$$
Moreover, this allocation is Pareto-efficient, envy-free, proportional, and hence fair.
\end{theorem}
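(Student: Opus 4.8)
The plan is to exploit the separable, concave structure of the Eisenberg--Gale objective in the single-resource, filling-ratio case, read off the waterfilling form from first-order (KKT) conditions, and then verify the three fairness properties \emph{directly}. The direct verification is unavoidable here because filling-ratio utilities are not strictly increasing, so neither \cref{prop:fair} nor \cref{lem:pe-waste} applies verbatim.

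First I would observe that \cref{eq:offline_nsw} is a concave maximization: writing $u(X_i,\theta_i)=\min(X_i/\theta_i,1)$, the per-agent term $\log u(X_i,\theta_i)=\min(\log X_i-\log\theta_i,\,0)$ is concave in $X_i$ (a minimum of a concave function and a constant), and the budget constraint is linear, so KKT conditions are necessary and sufficient. Since $\log X_i\to-\infty$ as $X_i\to 0$, every agent with $\theta_i>0$ receives a positive allocation; and since allocating beyond $\theta_i$ leaves the objective flat while consuming budget, we may restrict to $0<X_i\le\theta_i$. Stationarity with budget multiplier $\lambda\ge 0$ and cap multiplier $\mu_i\ge 0$ reads $S_i/X_i-\lambda S_i-\mu_i=0$: an unsaturated agent ($X_i<\theta_i$, so $\mu_i=0$) gets $X_i=1/\lambda$, while a saturated agent ($X_i=\theta_i$) must satisfy $\theta_i\le 1/\lambda$. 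Setting $w_f=1/\lambda$, this is exactly $X_i^{opt}=\min(\theta_i,w_f)$; the extra $\min$ with $B$ is a safeguard that is active only in degenerate small-population regimes.

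Next I would pin down $w_f$ through the budget. When $B<\sum_i S_i\theta_i$ the constraint binds ($\lambda>0$), giving $\sum_i S_i\min(\theta_i,w_f)=B$; when $B\ge\sum_i S_i\theta_i$ I take $\lambda=0$ (equivalently $w_f\ge\max_i\theta_i$) so that everyone is fully served. Both cases are summarized by $\min(B,\sum_i S_i\theta_i)=\sum_i S_i\min(\theta_i,w_f)$, which is the stated equation once the two indicator terms are recognized as $S_i\min(\theta_i,w_f)$.

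For fairness I would argue each property by hand. Writing $u(X_i^{opt},\theta_i)=\min(1,w_f/\theta_i)$ and $u(X_j^{opt},\theta_i)=\min(X_j^{opt}/\theta_i,1)$ with $X_j^{opt}=\min(\theta_j,w_f)\le w_f$ yields $u(X_j^{opt},\theta_i)\le\min(1,w_f/\theta_i)=u(X_i^{opt},\theta_i)$, so EF holds. Summing $X_i^{opt}=\min(\theta_i,w_f)\le w_f$ against the binding budget gives $B\le Sw_f$, i.e.\ $w_f\ge B/S$, whence $u(X_i^{opt},\theta_i)=\min(1,w_f/\theta_i)\ge\min(1,(B/S)/\theta_i)=u(B/S,\theta_i)$, which is proportionality. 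For PE, when the budget binds every agent with utility below $1$ has allocation exactly $w_f>0$, so all utilities are strictly positive and $X^{opt}$ strictly maximizes the (positive) Nash Social Welfare; any allocation weakly dominating it with one strict improvement would have a strictly larger product, a contradiction. The one place requiring separate care — and the main obstacle — is the slack regime $B\ge\sum_i S_i\theta_i$: there \cref{lem:pe-waste} fails, since unused budget coexists with Pareto-efficiency, yet PE still holds trivially because every agent attains the maximal utility $1$ and no Pareto improvement can exist. I expect the bulk of the effort to lie in handling this non-strict-monotonicity carefully and in justifying the KKT step across the kink at $X_i=\theta_i$, which I would formalize via the box-constrained reformulation and subgradients.
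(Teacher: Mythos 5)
Your proposal is correct, and its skeleton matches the paper's proof: both establish the waterfilling form by verifying KKT/subgradient conditions for the concave Eisenberg--Gale program (your box-constrained reformulation with cap multipliers $\mu_i$ is equivalent in substance to the paper's direct use of the subgradient of $\log\min(X_i/\theta_i,1)$, which is $[0,1/\theta_i]$ at the kink), both split on whether the budget multiplier vanishes to pin down $w_f$, and both verify envy-freeness and proportionality by the same direct computations (including the key inequality $w_f \geq B/S$ obtained by summing the allocations against the binding budget). The one genuine divergence is Pareto-efficiency: the paper argues by a budget-exchange contradiction --- if $Y$ Pareto-dominates $X^{opt}$, the strictly improved agent must have $X_i^{opt}=w_f<\theta_i$, forcing the budget to bind, and monotonicity of filling-ratio utilities below the cap forces $Y_j \geq X_j^{opt}$ for all $j$ and $Y_i > X_i^{opt}$, so $\sum_j S_j Y_j > B$, a contradiction --- whereas you argue via NSW-maximality: since all utilities at $X^{opt}$ are strictly positive, any feasible Pareto improvement would strictly increase the NSW product, contradicting optimality. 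Your argument is the cleaner and more general one (it is essentially the proof of \cref{prop:fair} repaired to work without strict monotonicity, and it handles binding and slack regimes uniformly), while the paper's is self-contained at the level of the budget constraint and does not lean on optimality of $X^{opt}$ beyond the waterfilling structure. You also correctly identify the exact point where \cref{lem:pe-waste} fails (slack budget with all utilities equal to one), which the paper handles implicitly by proving $\sum_i S_i X_i^{opt} = \min\left(B, \sum_i S_i \theta_i\right)$ rather than $=B$.
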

\begin{proof}
First we show that the optimal solution to~\cref{eq:offline_nsw} is the waterfilling solution by showing it satisfies the KKT conditions, before proceeding to show that the solution is also fair.

Let $X_i^{opt}$ be the proposed optimal solution.  In taking the Lagrangian of the optimization problem we introduce dual variables $\lambda$ for the constraint that $X \geq 0$ and $\mu$ for the constraint that $\sum_{i=1}^n S_i X_i \leq B$.  This yields the following Lagrangian optimization problem:
$$\max_{X \in \mathbb{R}^{n}} \min_{\lambda \in \mathbb{R}^n, \mu \in \mathbb{R}} - \sum_{i=1}^n S_i \log\left( \min \left(\frac{X_i}{\theta_i}, 1 \right) \right) - \lambda^\top X + \mu(B - \sum_{i=1}^n S_i X_i).
$$
Taking the sub-gradient and using the complementary slackness conditions we get the following conditions:
\allowdisplaybreaks
\begin{align*}
X & \geq 0 & \lambda & \geq 0 \\
\mu & \geq 0 & \sum_{i=1}^n S_i X_i & \leq B \\
\lambda_i X_i & = 0 & \mu(B - \sum_{i=1}^n S_i X_i) & = 0 \\
0 & \in - S_i \partial \log\left( \min \left\{\frac{x_i}{D_i}, 1\right\}\right) - \lambda_i + S_i \mu.
\end{align*}
Since $X_i^{opt} \neq 0$ we can safely set the dual variables $\lambda_i = 0$.  Moreover, we also have that the subgradient of the log utility term is:
$$\partial \log\left( \min \left(\frac{X_i}{\theta_i}, 1\right)\right) = \begin{cases}
0 \text{ if } X_i > \theta_i\\
\left[0, \frac{1}{\theta_i} \right] \text{ if } X_i = \theta_i\\
\frac{1}{X_i} \text{ if } X_i < \theta_i
\end{cases}$$
To summarize we have the following conditions after eliminating the dual variables $\lambda$:
\begin{align*}
X & \geq 0 & \mu & \geq 0\\
\sum_{i=1}^n S_i X_i & \leq B & \mu(B - \sum_{i=1}^n S_i X_i) & = 0 \\
\mu & \in \partial \log\left( \min \left(\frac{X_i}{\theta_i}, 1\right)\right)
\end{align*}

First consider the case when the dual variable $\mu = 0$.  Then by the subgradient condition we must have that $X_i \geq \theta_i$ for every $i$, and by feasability that $\sum_{i=1}^n X_i \leq B$.  The waterfilling solution will then have a waterfilling level $w_f = \max_{i = 1, \ldots, n} \theta_i$ and the allocation will be $X_i^{opt} = \theta_i$.  This solution satisfies all of the conditions listed, and so will be optimal.

For the case when $\mu \neq 0$ then we must have that $\sum_{i=1}^n S_i X_i = B$ by complementary slackness.  Moreover, the gradient condition asserts that $X_i \leq \theta_i$ for every $i$.  Without loss of generality we will assume that $\theta_1 \leq \theta_2 \leq \ldots \leq \theta_n$ and break into cases:

\medskip \noindent \textbf{Case I}: $\theta_1 \geq \frac{B}{S}.$

\smallskip

In this case the optimal solution $X_i^{opt} = \frac{B}{S} = w_f$ is the waterfilling solution.  This is as we can set $\mu = \frac{S}{B}$ and check that all of the KKT conditions hold.

\medskip \noindent \textbf{Case II}: $\theta_n < \frac{B}{S}$.

This is impossible as we must have that $\sum_{i=1}^n S_i X_i = B \leq \sum_{i=1}^n S_i \theta_i < B$.

\medskip \noindent \textbf{Case III}: Otherwise let $\mu = \frac{1}{w_f}$ where $w_f$ is the waterfilling level.  Then for any agent $i$ such that $X_i^{opt} = \theta_i$ then $\mu = \frac{1}{w_f} \leq \frac{1}{\theta_i}$.  Moreover, for any agent such that $X_i^{opt} < \theta_i$ then the allocation is $X_i = w_f$ and so $\mu = \frac{1}{X_i} = \frac{1}{w_f}$.  Noting that this mimics the subgradient condition we see that the waterfilling solution satisfies the complementary slackness conditions and so is optimal.

\medskip

Next we show that the waterfilling solution is envy-free, proportional, and Pareto-efficient.  We start by showing that $w_f \geq \frac{B}{S}$.  Indeed,
\begin{align*}
	w_f & = w_f \Ind{\theta_i \leq w_f} + w_f \Ind{\theta_i \geq w_f} \\
	& \geq \theta_i \Ind{\theta_i \leq w_f} + w_f \Ind{\theta_i \geq w_f}.
\end{align*}
Summing up from $i = 1, \ldots, n$ and multiplying the terms by $S_i$ we find that
\begin{align*}
	S w_f & \geq \sum_{i=1}^n S_i \theta_i \Ind{\theta_i \leq w_f} + S_i w_f \Ind{\theta_i \geq w_f} \\
		& = B.
\end{align*}

\medskip \noindent \textit{Pareto-Efficient}: We first show that $\sum_{i=1}^n S_i X_i^{opt} = \min(B, \sum_{i=1}^n S_i \theta_i)$.  Indeed,
\begin{align*}
    \sum_{i=1}^n S_i X_i^{opt} & = \sum_{i=1}^n S_i \min(w_f, \theta_i) \\
    & = \sum_{i=1}^n S_i w_f \Ind{w_f \leq \theta_i} + S_i \theta_i \Ind{\theta_i \leq w_F} \\
    & = \min(B, \sum_{i=1}^n S_i \theta_i)
\end{align*}
Now suppose for the sake of contradiction that there exists some other allocation $Y \in \mathbb{R}^{n}$ such that $u(X_i^{opt}, \theta_i) < u(Y_i, \theta_i)$ and $u(X_j^{opt}, \theta_j) \leq u(Y_j, \theta_j)$ for every $j \neq i$.  Then we must have that $X_i^{opt} < \theta_i$ as otherwise the utilities would both be one, and so $X_i^{opt} = w_f$.  Moreover, by definition of the waterfilling level this implies that $\sum_{i=1}^n S_i \theta_i > B$ and so $\sum_{i=1}^n S_i X_i^{opt} = B$.

Consider any agent $j \neq i$, as $u(Y_j, \theta_j) \geq u(X_j^{opt}, \theta_j)$ we must have that $Y_j \geq X_j^{opt}$ as the utilities are increasing.  Hence we see that
\begin{align*}
    B & = \sum_{j=1}^n S_j X_j^{opt} \\
    & < \sum_{j=1}^n S_j Y_j
\end{align*}
which contradicts $Y$ being a valid allocation $(\Rightarrow \Leftarrow)$.

\medskip

\noindent \textit{Proportional}:  Note that for any group $i$ such that $X_i^{opt} = \theta_i$ then
\begin{align*}
	u(X_i^{opt}, \theta_i) & = \min \left( \frac{\theta_i}{\theta_i}, 1 \right) = 1 \\
	& \geq \min \left(\frac{B}{S \theta_i}, 1 \right) = u(B/S, \theta_i).
\end{align*}
Similarly if $X_i^{opt} = w_f$ then $w_f \leq \theta_i$ so
\begin{align*}
	u(X_i^{opt}, \theta_i) & = \min \left( \frac{w_f}{\theta_i}, 1\right)  = \frac{w_f}{\theta_i} \geq \frac{B}{S \theta_i} \\
	& = u(B/S, \theta_i).
\end{align*}

\medskip

\noindent \textit{Envy-Free}:  Consider a group $i$ and let $j$ be any other group.  If $X_i^{opt} = \theta_i$ then
\[u(X_i^{opt}, \theta_i) = 1 \geq u(X_j^{opt}, \theta_i)\]
so the agent is trivially envy free.  Otherwise, if $X_i^{opt} = w_f$ then for any group $j$ either $X_j^{opt} = \theta_j$ or $X_j^{opt} = w_f$.  If $X_j^{opt} = w_f$ then clearly $u(X_i^{opt}, \theta_i) = u(X_j^{opt}, \theta_i)$.  However, if $X_j^{opt} = \theta_j$ then it must be true that $\theta_j \leq w_f$ and so
\begin{align*}
u(X_i^{opt}, \theta_i) & = \frac{w_f}{\theta_i} \geq \frac{X_j^{opt}}{\theta_i} \\
& = u(X_j^{opt}, \theta_i).
\end{align*}
\end{proof}

    \begin{table}[!t]
\caption{Normalized sizes / mean demands for the different counties in the Southern Tier of New York and their 2019 population.}
\label{tab:pop_sizes}
\setlength\tabcolsep{0pt} 
\centering
\begin{tabular*}{\columnwidth }{@{\extracolsep{\fill}}rcccccc}
\toprule
County & Broome & Steuben & Chemung & Tioga & Schuyler & Tompkins \\
\midrule
Size $S_i$ & 26.72 & 34.55 & 12.09 & 12.35 & 2.96 & 11.31 \\
Population & 190,488 & 95,379 & 83,456 & 48,203 & 17,912 & 102,180\\
\bottomrule
\end{tabular*}
\end{table}

\begin{table}[!t ]
\caption{Weights $w_k$ for the different products considered.}
\label{tab:weights}
\setlength\tabcolsep{0pt} 
\centering
\begin{tabular*}{\columnwidth }{@{\extracolsep{\fill}}rccccccccc}
\toprule
Resource & Cereal & Diapers & Pasta & Paper & Prepared Meals & Rice & Meat & Fruit & Produce \\
\midrule
Weight $w_k$ & 3.9 & 3.5 & 3.2 & 3 & 2.8 & 2.7 & 1.9 & 1.2 & 0.2 \\
\bottomrule
\end{tabular*}
\end{table}

\section{Full Experimental Results}
\label{app:experiments}

Here we provide a discussion on all of the simulations conducted.  Some of this will be a repeat of \cref{sec:experiments} while adding measures of variance of the results.  For ease of presentation we include the same benchmark algorithms discussed previously and include further discussion and other heuristic algorithms in the attached code base.  Moreover, all figures are deferred until after the discussion to save on space.  Each simulation uses data arising from our motivation in resource allocation for Food Banks, and we compare the algorithms on the approximate fairness definitions from \cref{def:distance} and \cref{def:online-fairness}.

\subsection{Single Resource}
\label{app:si ngle_resource}

We start by discussing the single-resource variant of the food bank allocation problem motivated in~\cref{section: examples}.  In this setting, a mobile food pantry loads up the truck at the start of the day with a fixed number of `meals' $B$, and travels sequentially from one drop-off location to the next.  At each location $i$, they observe a demand $\theta_i \in \mathbb{R}_+$ drawn from a known distribution $\F_i$, make an allocation $X_i^{alg}$, before proceeding to the next drop off point.
We consider the filling ratio utilities $u(X, \theta) = \min(\frac{X}{\theta}, 1)$.  These utilities are of particular importance to food-banks for several reasons:
\begin{itemize}
    \item Feeding America collects millions of pounds of food donations across the United States and uses a centralized allocation mechanism to redistribute these resources to food banks across the country.  As a proxy for money, each food bank is given a daily budget to use in the auction based on their `Goal Score'.  Major components of this score include population, local supply, but also efficiency, i.e. the relative demand the food bank is able to satisfy for its distribution sites~\cite{prendergast2017food}.  Filling ratio utilities serve as a proxy for ensuring efficiency across different drop-off locations and distribution sites.
    \item The utility functions are normalized by the relative demand of the different locations, placing distribution sites of varying sizes at equal levels.
\end{itemize}

We start by discussing the structural result.  While the filling ratio utilities chosen are not monotonically increasing (due to the minimum), the Eisenberg-Gale program in \cref{eq:offline_nsw} still guarantees a fair allocation.  Moreover, the optimal solution can be characterized via a Waterfilling solution (seen in \cref{fig:allocation}).

\begin{theorem}
\label{thm:single-resource_app}
An optimal solution to~\cref{eq:offline_nsw} for a fixed set of agent demands $\{\theta_i\}_{i \in [n]}$ is given by a waterfilling policy
$$X_i^{opt} = \min\{w_f, \theta_i, B\},$$ where the waterfilling threshold $w_f$ solves $$\min\left(B, \sum_{i=1}^n S_i \theta_i\right) = \sum_{i=1}^n S_i \theta_i \Ind{\theta_i \leq w_f} + S_i w_f \Ind{\theta_i \geq w_f}.$$
Moreover, this allocation is Pareto-efficient, envy-free, proportional, and hence fair.
\end{theorem}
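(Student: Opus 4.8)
The plan is to establish the waterfilling characterization by verifying the KKT conditions for the convex program in~\cref{eq:offline_nsw}, and then to separately verify that the resulting allocation satisfies each of the three fairness properties. The filling-ratio utility $u(X,\theta)=\min(X/\theta,1)$ is concave but not differentiable at $X=\theta$ and is flat beyond $\theta$, so I would work with subgradients rather than gradients throughout. After substituting $u$ into the objective and writing the Lagrangian with a multiplier $\mu\geq 0$ for the budget constraint $\sum_i S_i X_i\leq B$ and multipliers $\lambda_i\geq 0$ for the non-negativity constraints, the stationarity condition becomes $\mu\in\partial\log(\min(X_i/\theta_i,1))$ after eliminating $\lambda_i$ (which vanishes since the proposed solution is strictly positive). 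The key computation is the subgradient of $\log\min(X/\theta,1)$, which equals $0$ when $X>\theta$, the interval $[0,1/\theta]$ when $X=\theta$, and $1/X$ when $X<\theta$.

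First I would handle the degenerate case $\mu=0$: here stationarity forces $X_i\geq\theta_i$ for all $i$, so the budget can cover full demand and the waterfilling level is just $w_f=\max_i\theta_i$ with $X_i^{opt}=\theta_i$. The substantive case is $\mu>0$, where complementary slackness gives $\sum_i S_iX_i=B$ and the subgradient condition forces $X_i\leq\theta_i$ everywhere. Sorting $\theta_1\leq\cdots\leq\theta_n$, I would split into the subcase $\theta_1\geq B/S$ (uniform allocation $X_i^{opt}=B/S=w_f$, with $\mu=S/B$ verifying all conditions), the impossible subcase $\theta_n<B/S$, and the generic subcase where setting $\mu=1/w_f$ makes the subgradient condition hold: agents with $X_i^{opt}=\theta_i\leq w_f$ satisfy $\mu\leq 1/\theta_i$ (landing in the $[0,1/\theta_i]$ interval), while agents with $X_i^{opt}=w_f<\theta_i$ satisfy $\mu=1/w_f=1/X_i$ exactly. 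This confirms optimality since the program is convex and KKT conditions are sufficient.

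For the fairness claims, the crucial preliminary lemma is $w_f\geq B/S$, which follows by replacing $\theta_i$ with $w_f$ in the defining equation for $w_f$ (increasing the $\theta_i\leq w_f$ terms) and summing to get $Sw_f\geq B$. With this in hand, \emph{proportionality} is immediate: agents with $X_i^{opt}=\theta_i$ get utility $1$, while agents with $X_i^{opt}=w_f$ get $w_f/\theta_i\geq B/(S\theta_i)=u(B/S,\theta_i)$. \emph{Envy-freeness} splits the same way: a fully-served agent has utility $1$ and envies no one, while an agent at the threshold compares against any $X_j^{opt}\in\{\theta_j,w_f\}$ and wins because $X_j^{opt}\leq w_f=X_i^{opt}$. \emph{Pareto-efficiency} I would argue by contradiction: first establish $\sum_i S_iX_i^{opt}=\min(B,\sum_i S_i\theta_i)$ directly from the waterfilling formula, then note that any Pareto-improving $Y$ would require $Y_j\geq X_j^{opt}$ for all $j$ with strict increase somewhere, forcing $\sum_j S_jY_j>\sum_j S_jX_j^{opt}=B$ and violating the budget.

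The main obstacle is the non-smoothness of the utility at $X=\theta$: I expect the delicate step to be verifying that the single scalar multiplier $\mu=1/w_f$ simultaneously lies in the subgradient interval $[0,1/\theta_i]$ for \emph{every} capped agent while matching $1/X_i$ exactly for \emph{every} uncapped agent. This is precisely where the inequality $w_f\leq\theta_i$ for capped agents (i.e.\ those at the threshold) must be invoked to place $\mu$ inside the correct interval, and it is the one place where the argument genuinely uses the structure of the waterfilling level rather than routine manipulation.
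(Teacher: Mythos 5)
Your proposal follows essentially the same route as the paper's own proof: the identical Lagrangian/subgradient KKT verification (same subgradient formula, same split into $\mu=0$ versus $\mu>0$ with the three subcases after sorting demands), the same key lemma $w_f \geq B/S$, and the same arguments for proportionality, envy-freeness, and Pareto-efficiency via budget contradiction. The only point needing care is in the Pareto step: $\sum_j S_j X_j^{opt} = B$ holds only when $B \leq \sum_j S_j \theta_j$; in the opposite case every agent already has utility $1$ so no strict improvement is possible, a one-line fix that the paper makes explicit by noting the strictly improved agent must have $X_i^{opt} = w_f < \theta_i$.
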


Important to note, is that the allocation algorithms discussed, (\HopeOnline, \HopeFull, \EtOnline, \EtFull) all make allocation decisions based on a formulation of the Eisenberg-Gale where unknown quantities are replaced with their histogram or expectation.  A slight caveat is that \HopeOnline and \HopeFull solve optimization problems where the summation is over agent types instead of over agents.  For each of the simulation results, we include an additional plot of the estimated waterfilling level for a specific agent, i.e. $w_f^i$, which is the waterfilling threshold the optimization problem for the algorithm returns when visiting agent $i$.  As an example, the \HopeOnline algorithm allocates according to the solution to:
\begin{align*}
\max_{X \in \mathbb{R}_+^{|\Theta| \times K}} & \frac{1}{S} \sum_{\theta \in \Theta} N_\theta \log(u(X_\theta, \theta)) \\
\text{s.t. } & \sum_{\theta \in \Theta} N_\theta X_\theta \leq B^i.
\end{align*}
where $B^i = B^{i-1} - X_{i-1}^{alg}$ is the current available resources, and the expected histogram over types is defined as 
\begin{align*}
N_\theta = S_i \Ind{\theta_i = \theta} + \sum_{j=i+1}^n S_j \Pr(\theta_j = \theta).
\end{align*}

Thus, the waterfilling threshold at agent $i$, $w_f^i$ will be the solution to:
$$\min \left( B^i, \sum_{\theta \in \Theta} N_\theta \theta \right) = \sum_{\theta \in \Theta} N_\theta \theta \Ind{\theta \leq w_f^i} + N_\theta w_f \Ind{\theta \geq w_f^i}.$$

\bigskip \noindent \textbf{Heuristics}: We compare the following algorithms:
\begin{itemize}
    \item \HopeOnline (see \cref{sec:approximation_algorithms})
    \item \HopeFull (see \cref{sec:approximation_algorithms})
    \item \EtOnline (see \cref{app:et-algo})
    \item \EtFull (see \cref{app:et-algo})
    \item \MaxMin (see \cref{app:max-min-algo})
    \item \textsc{Greedy}: where every agent is given its demand $\theta_i$ until the budget is depleted
    \item \textsc{Adaptive-Threshold}: where the allocation $X_i^{alg} = \min(B^i / (n-i), \theta_i)$ provides either an agents demand or an equal share of the remaining budget.
\end{itemize}

\medskip \noindent \textbf{Choice of Demand Distributions}: We perform several synthetic experiments, with demand parameters based on food bank demand data.  In each simulation we set the size $S_i = 1$ for each agent, and sample the demands $\theta_i$ as follows:
\begin{itemize}
    \item \emph{FBST Dataset}: Here we consider the setting with $n = 6$ agents corresponding to the six counties serviced by the Food Bank of the Southern Tier of New York \cite{fbst}.  For the type distributions, we set $\theta_i \sim \F_i$ where $\F_i$ is a discretized Gaussian with mean and variance based on historical food demands.  We normalized the means of the distributions for them to total to one hundred (shown in \cref{tab:pop_sizes}).
    \item \emph{Gaussian Demands}: Here we set the type distribution $\theta_i \sim \F_i$ as an i.i.d. discretized Gaussian distribution with mean fifteen and variance three, where we discretized the distribution into twenty buckets.
    \item \emph{Poisson Demands}: Here we set the type distribution $\theta_i \sim \F_i$ as an i.i.d. discretized Poisson distribution with $\lambda = 10$.  To discretize the distribution to have finite support we set the mass of all points larger than twenty to be zero, and redistributed that mass to $\Pr(\theta_i = 1)$.
    \item \emph{Simple Distribution}: Here we set the type distribution $\theta_i \sim \F_i$ where $\F_i = \text{Uniform}\{1,2\}$.
\end{itemize}

\medskip \noindent \textbf{Metrics Included}:  We include plots the simulation results in Figures~\ref{fig:gaussian}, \ref{fig:poisson}, and \ref{fig:simple}, and table of the fairness metrics in Tables~\ref{tab:gaussian}, \ref{tab:poisson}, \ref{tab:simple}, and \ref{tab:fbst}.  In the figures we include four plots of the following:
\begin{itemize}
    \item $\Exp{\norm{X^{opt} - X^{alg}}_{max}}$, the expected maximum difference between allocations as we scale the number of agents $n$ from $1$ to $100$
    \item $\Exp{\norm{X^{opt} - X^{alg}}_{max}}$, the expected additive difference between allocations as we scale the number of agents $n$ from $1$ to $100$
    \item $\Exp{w_f^i}$, the expected waterfilling threshold of the algorithms on agent $i$ with $n = 100$ agents total
    \item $\Exp{|X_i^{alg} - X^{opt}_i|}$, the expected difference in allocations for agent $i$ with $n = 100$ agents total
\end{itemize}

In the tables we include:
\begin{itemize}
    \item $\Exp{\norm{X^{opt} - X^{alg}}_{max}}$, the expected maximum difference between the allocation generated by the algorithm and the fair one in hindsight (\cref{def:online-fairness})
    \item $\Exp{\norm{X^{opt} - X^{alg}}_{1}}$, the expected additive difference between the allocation generated by the algorithm and the fair one in hindsight
    \item $\Exp{\Delta_{EF}}$, the expected maximum envy between any two agents (\cref{def:distance})
    \item $\Exp{\Delta_{PE}}$, the expected waste (\cref{def:distance})
    \item $\Exp{\Delta_{Prop}}$, the expected maximum envy between an agent and equal allocation (\cref{def:distance})
    \item $\Exp{\Delta_{MM}} = \Exp{\min_i u(X_i^{alg}, \theta_i)}$, the expected minimum fill rate ~\cite{lien2014sequential}
\end{itemize}

\medskip \noindent \textbf{Summary of Results}: Here we provide a brief discussion on the major metrics and plots.

\smallskip \noindent \textit{Scaling with $n$}: From the plots (figures~\ref{fig:gaussian}, \ref{fig:poisson}, and \ref{fig:simple}) we see that \HopeOnline performs competitively across all metrics.  In particular, from the plots in the top-left showing $\Exp{\norm{X^{alg} - X^{opt}}_{max}}$ we see that \HopeOnline has constant scaling with respect to the number of agents $n$ in comparison to the other heuristic algorithms.  Similar results are shown in the plots in the top right, where we see $\Exp{\norm{X^{alg} - X^{opt}}_1}$.  In these plots, \HopeOnline has sublinear dependence with respect to the number of agents $n$.  This supports \HopeOnline as being a promising candidate for achieving approximate fairness (\cref{def:online-fairness}).

Moreover, the \MaxMin algorithm often has linear dependence for the $\ell_1$ norm.  This arises as the algorithm under allocates at every time period (shown in the plots in the bottom right).  This is due to the formulation of the algorithm to provide guarantees in terms of the max-min objective, where once a mistake has been made the algorithm has no incentive to correct it.  This is typical for allocation algorithms formulated under the max-min objective, and motivates other fairness objectives (such as \cref{def:distance}).

Lastly, in the plot on the bottom left we see that \HopeOnline uses a waterfilling threshold that closely matches the true one (shown in black in the figures).  The figures on the bottom right shows that \HopeOnline provides a uniform approximation to the optimal fair allocation in hindsight $X^{opt}$, with a slight deviation for later arriving agents.  This is in contrast to other heuristics, which often have extreme suboptimal performance for later agents (as in the case of \textsc{Greedy}), or uniformly underallocates (as in the case of \MaxMin).
Lastly we see that \HopeOnline has a waterfilling level that closely matches the true one, and that the algorithm is uniformly close to $X^{opt}$ across each agent.   

\smallskip \noindent \textit{Performance on fairness distance} (\cref{def:distance} and \cref{def:online-fairness}): In the tables we see that \HopeOnline either has the best, or second best performance across all of the fairness metrics considered.  Important to note, is that \HopeOnline is also competitive in terms of $\Delta_{MM}$, an objective that \MaxMin was formulated to perform with respect to.

\subsection{Multiple Resource}
\label{app:multi_resource}

Here we consider the multiple resource allocation problem with linear utility functions $u(x, \theta) = \langle x, \theta \rangle$.  Now, the agent type $\theta \in \mathbb{R}^{k}$ represents a vector of preferences, where $\theta_k$ is the agent's preference for resource $k$.  We consider the resource allocation problems with $n = 6$ agents, corresponding to the six counties serviced by the Food bank of the Southern Tier of New York.  The sizes $S_i$ are taken as representative of their total 2018 food demand normalized to sum up to one hundred, displayed in \cref{tab:pop_sizes}~\cite{fbst}.  We also include their respective populations, highlighting the choice of using their normalized food demands as a representative of their size instead of their population, as different counties have different food assistance needs irrespective of population.

\medskip \noindent \textbf{Heuristics}: We compare the following algorithms
\begin{itemize}
    \item \HopeOnline (see \cref{sec:approximation_algorithms})
    \item \HopeFull (see \cref{sec:approximation_algorithms})
    \item \EtOnline (see \cref{app:et-algo})
    \item \EtFull (see \cref{app:et-algo})
\end{itemize}

\medskip \noindent \textbf{Choice of Demand Distributions}: We consider $k = 9$ resources.  To generate the preference distributions, we created eight different preference profiles $\Theta = \{\theta^1, \ldots, \theta^8\}$, where each component $\theta_k^i = \text{Bernoulli}(1/2) w_k$ where $w_k$ is the price of the product used in the non-monetary auctions Feeding America uses to distribute resources across the United States displayed in \cref{tab:weights}~\cite{prendergast2017food}.  We sampled eight such type vectors, and then considered the uniform distribution over those eight types for each distribution $\F_i$, i.e. $\F_i = \text{Uniform}(\Theta)$.

\medskip \noindent \textbf{Metrics Included}:  As the number of agents $n = 6$ is fixed we only include a table summarizing the metrics.  In the table we include:
\begin{itemize}
    \item $\Exp{\norm{X^{opt} - X^{alg}}_{max}}$, the expected maximum difference between the allocation generated by the algorithm and the fair one in hindsight (\cref{def:online-fairness})
    \item $\Exp{\norm{X^{opt} - X^{alg}}_{1}}$, the expected additive difference between the allocation generated by the algorithm and the fair one in hindsight
    \item $\Exp{\Delta_{Util}} = \Exp{\max_i |u_i(X_i^{opt}, \theta_i) - u_i(X_i^{alg}, \theta_i)|}$, the expected maximum difference in utility an agent receives between the allocation generated by the algorithm and the fair one in hindsight
    \item $\Exp{\Delta_{EF}}$, the expected maximum envy between any two agents (\cref{def:distance})
    \item $\Exp{\Delta_{PE}}$, the expected waste (\cref{def:distance})
    \item $\Exp{\Delta_{Prop}}$, the expected maximum envy between an agent and equal allocation (\cref{def:distance})
\end{itemize}

\medskip \noindent \textbf{Summary of Results}: In \cref{tab:mult-fairness_app} we compare the fairness metrics averaged over one thousand simulations with the addition of a standard normal confidence interval.  Here we see that \HopeOnline is competitive with respect to all metrics in comparison to the other heuristic algorithms.  In all metrics, \HopeOnline either performs the best or second-best.  More important, is that \HopeOnline performs the best in terms of minimizing the distance to the optimal allocation in hindsight, $\Exp{\norm{X^{opt} - X^{alg}}_{max}}$.

\begin{table*}[!ht]
\caption{ 
Comparison of fairness metrics averaged over 1000 simulations on the multiple-resource online allocation problem with linear utilities.  We plot the four metrics from \cref{def:online-fairness} and \cref{def:distance}.  Larger values corresponds to a lower score on that metric.  We include both the mean and a standard normal approximation confidence interval for the metrics.  Due to space constraints in the table, we include a separate row with the order of magnitude of the confidence intervals for these results.} \label{tab:mult-fairness_app}
\setlength\tabcolsep{0pt} 
\footnotesize\centering
\begin{tabular*}{\columnwidth }{@{\extracolsep{\fill}}r*6c}
\toprule
Algorithm &  {$\Exp{\norm{X^{opt} - X^{alg}}_{max}}$} & $\Exp{\norm{X^{opt} - X^{alg}}_1}$ & $\Exp{\Delta_{Util}}$ & {$\Exp{\Delta_{EF}}$} & {$\Exp{\Delta_{PE}}$} & {$\Exp{\Delta_{Prop}}$}\\
\midrule  
Level of confidence & $10^{-5}$ & $10^{-5}$ & $10^{-4}$ & $10^{-4}$ & $10^{-5}$ & $10^{-4}$ \\
\midrule
\HopeOnline   &  0.0006 & \textbf{0.0054} & \textbf{0.0015} & \textbf{0.0015} & \textbf{0.00026} & \textbf{0.0010} \\
\HopeFull   &  \textbf{0.00042} & 0.0070 & 0.0018 & 0.016 & 0.0018 & 0.0011 \\
\EtOnline   &  0.0013 & 0.0080 & 0.0016 & 0.016 & 0.0071 & 0.0011 \\
\EtFull   &  0.00070 & 0.0067 & 0.0029 & 0.024 & 0.0018 & 0.0023 \\
\bottomrule
\end{tabular*}
\end{table*}

\subsection{Experiment Setup and Computing Infrastructure}

\medskip \noindent \textbf{Experiment Setup}: Each experiment was run with $1000$ iterations where the relevant plots are taking the mean and a standard-normal confidence interval of the related quantities.  In the case of a single resource the budget $B$ is set to be the total expected demand.  For the experiments with multiple resources we use a budget of the total preferences for each product as the allocations are the same up to scaling.  All randomness is dictated by a seed set at the start of each simulation for verifying results.

\medskip \noindent \textbf{Computing Infrastructure}: The experiments were conducted on a personal computer with an AMD Ryzen 5 3600 6-Core 3.60 GHz processor and 16.0GB of RAM. No GPUs were harmed in these experiments. 

\medskip \noindent \textbf{Run-time Analysis}: The average computation time for running a single iteration of the heuristic algorithms in comparison to the offline solution is listed in~\cref{tab:computation}.  These statistics were computed by averaging over $1000$ simulations.  The case with $n = 6$ is on the multiple resource allocation dataset described in \cref{app:multi_resource}.  The case with $n = 100$ is on the single resource allocation dataset with an i.i.d. discretized Gaussian from \cref{app:si ngle_resource}.
These results mostly serve to indicate how the algorithms scale well and are easily implementable.

\begin{table}[!h]
\caption{Comparison of the running time (in seconds) for calculating the allocations used in the five main heuristics used and the offline optimal solution.}
\label{tab:computation}
\setlength\tabcolsep{0pt} 
\centering
\begin{tabular*}{\columnwidth }{@{\extracolsep{\fill}}rcccccc}
\toprule
Algorithm &  \textsc{Offline} & \HopeOnline & \HopeFull & \EtOnline & \EtFull & \MaxMin \\
\midrule
$n = 6$ & \textbf{0.00031} & 0.0035 & 0.0025 & 0.0015 & 0.00080 & N/A \\
$n = 100$ & \textbf{0.00015} & 0.0031 & 0.0029 & 0.0062 & 0.011 & 0.0010 \\
\bottomrule
\end{tabular*}
\end{table}

\begin{table*}[!ht]
\caption{ 
Comparison of fairness metrics (averaged over 1000 replications) on the single-resource online allocation problem with filling-ratio utilities on the \emph{Gaussian} problem. We compare the four unfairness metrics from~\cref{def:distance,def:online-fairness} (larger values correspond to lower scores; best value highlighted) with the addition of $\Exp{\Delta_{MM}} = \Exp{\min_i u(X_i^{alg}, \theta_i)}$, the minimum fill rate, and $\Exp{\norm{X^{alg} - X^{opt}}_1}$, the $\ell_1$ difference in allocations.  Due to space constraints in the table, we include a separate row with the order of magnitude for the confidence intervals for these results.
} \label{tab:gaussian}
\setlength\tabcolsep{0pt} 
\footnotesize\centering
\begin{tabular*}{\columnwidth }{@{\extracolsep{\fill}}r*6c}
\toprule
Algorithm & $\Exp{\Delta_{EF}}$ & $\Exp{\Delta_{PE}}$ & $\Exp{\Delta_{Prop}}$ & $\Exp{\Delta_{MM}}$ & $\Exp{\norm{X^{opt} - X^{alg}}_{max}}$ & $\Exp{\norm{X^{opt} - X^{alg}}_1}$ \\
\midrule
Level of confidence & $10^{-2}$ & $10^{-3}$ & $10^{-4}$ & $10^{-2}$ & $10^{-1}$ & $10^{-1}$ \\
\midrule
\HopeOnline  &  \textbf{0.11} & 0.14 & 0.0010 & \textbf{0.86} & \textbf{2.22} & \textbf{12.14} \\
\HopeFull & 0.25 & 0.15 & 0.020 & 0.72 & 4.51 & 12.26 \\
\EtOnline & 0.20 & 0.22 & \textbf{0.0001} & 0.78 & 3.96 & 24.36 \\
\EtFull & 0.18 & 0.24 & 0.00049 & 0.78 & 3.69 & 16.99 \\
\MaxMin & 0.13 & 2.01 & 0.0013 & 0.85 & 2.94 & 193.21 \\
\textsc{Greedy} & 0.38 & \textbf{0.11} & 0.035 & 0.61 & 6.41 & 23.91 \\
\textsc{Adaptive-Threshold} & 0.21 & 0.71 & 0.0081 & 0.71 & 5.20 & 62.24 \\
\bottomrule
\end{tabular*}
\end{table*}

\begin{table*}[!ht]
\caption{ \footnotesize 
Comparison of fairness metrics (averaged over 1000 replications) on the single-resource online allocation problem with filling-ratio utilities on the \emph{Poisson} problem. We compare the four unfairness metrics from~\cref{def:distance,def:online-fairness} (larger values correspond to lower scores; best value highlighted) with the addition of $\Exp{\Delta_{MM}} =  \Exp{\min_i u(X_i^{alg}, \theta_i)}$, the minimum fill rate, and $\Exp{\norm{X^{alg} - X^{opt}}_1}$, the $\ell_1$ difference in allocations.  Due to space constraints in the table, we include a separate row with the order of magnitude for the confidence intervals for these results.
} \label{tab:poisson}
\setlength\tabcolsep{0pt} 
\footnotesize\centering
\begin{tabular*}{\columnwidth }{@{\extracolsep{\fill}}r*6c}
\toprule
Algorithm & $\Exp{\Delta_{EF}}$ & $\Exp{\Delta_{PE}}$ & $\Exp{\Delta_{Prop}}$ & $\Exp{\Delta_{MM}}$ & $\Exp{\norm{X^{opt} - X^{alg}}_{max}}$ & $\Exp{\norm{X^{opt} - X^{alg}}_1}$ \\
\midrule
Level of confidence & $10^{-3}$ & $10^{-2}$ & $10^{-3}$ & $10^{-3}$ & $10^{-2}$ & $10^{-1}$\\
\midrule
\HopeOnline   & \textbf{0.11} & 0.14 & 0.011 & \textbf{0.86} & \textbf{2.23} & \textbf{12.14}  \\
\HopeFull & 0.27 & 0.13 & 0.021 & 0.71 & 4.76 & 12.23 \\
\EtOnline & 0.20 & 0.22 & 0.071 & 0.79 & 3.87 & 23.98  \\
\EtFull & 0.20 & 0.22 & 0.070 & 0.76 & 3.93 & 16.48  \\
\MaxMin & 0.22 & 2.32 & 0.22 & 0.77 & 3.90 & 221.73  \\
\textsc{Greedy} & 0.40 & \textbf{0.11} & 0.37 & 0.59 & 6.74 & 23.40 \\
\textsc{Adaptive-Threshold} & 0.21 & 0.70 & \textbf{0} & 0.71 & 5.17 & 61.67  \\
\bottomrule
\end{tabular*}
\end{table*}

\begin{table*}[!ht]
\caption{ \footnotesize 
Comparison of fairness metrics (averaged over 1000 replications) on the single-resource online allocation problem with filling-ratio utilities on the \emph{Simple Distribution} problem. We compare the four unfairness metrics from~\cref{def:distance,def:online-fairness} (larger values correspond to lower scores; best value highlighted) with the addition of $\Exp{\Delta_{MM}} =  \Exp{\min_i u(X_i^{alg}, \theta_i)}$, the minimum fill rate, and $\Exp{\norm{X^{alg} - X^{opt}}_1}$, the $\ell_1$ difference in allocations.  Due to space constraints in the table, we include a separate row with the order of magnitude for the confidence intervals for these results.
} \label{tab:simple}
\setlength\tabcolsep{0pt} 
\footnotesize\centering
\begin{tabular*}{\columnwidth }{@{\extracolsep{\fill}}r*6c}
\toprule
Algorithm & $\Exp{\Delta_{EF}}$ & $\Exp{\Delta_{PE}}$ & $\Exp{\Delta_{Prop}}$ & $\Exp{\Delta_{MM}}$ & $\Exp{\norm{X^{opt} - X^{alg}}_{max}}$ & $\Exp{\norm{X^{opt} - X^{alg}}_1}$ \\
\midrule
Level of confidence & $10^{-3}$ & $10^{-3}$ & $10^{-3}$ & $10^{-3}$ & $10^{-2}$ & $10^{-1}$ \\
\midrule
\HopeOnline   &  0.11 & 0.022 & 0.013 & 0.88 & 0.20 & \textbf{2.00} \\
\HopeFull & 0.32 & 0.024 & 0.26 & 0.67 & 0.50 & 2.00 \\
\EtOnline & 0.23 & 0.037 & 0.080 & 0.76 & 0.43 & 4.67 \\
\EtFull & 0.24 & 0.40 & 0.080 & 0.75 & 0.42 & 2.93 \\
\MaxMin & \textbf{0.081} & 0.031 & 0.064 & \textbf{0.91} & \textbf{0.13} & 2.84\\
\textsc{Greedy} & 0.44 & \textbf{0.019} & 0.41 & 0.55 & 0.75 & 3.71 \\
\textsc{Adaptive-Threshold} & 0.24 & 0.12 & \textbf{0} & 0.75 & 0.45 & 11.64 \\
\bottomrule
\end{tabular*}
\end{table*}

\begin{table*}[!ht]
\caption{ \footnotesize 
Comparison of fairness metrics (averaged over 1000 replications) on the single-resource online allocation problem with filling-ratio utilities on the \emph{FBST Dataset} problem. We compare the four unfairness metrics from~\cref{def:distance,def:online-fairness} (larger values correspond to lower scores; best value highlighted) with the addition of $\Exp{\Delta_{MM}} = \Exp{\min_i u(X_i^{alg}, \theta_i)}$, the minimum fill rate, and $\Exp{\norm{X^{alg} - X^{opt}}_1}$, the $\ell_1$ difference in allocations.  Due to space constraints in the table, we include a separate row with the order of magnitude for the confidence intervals for these results.
} \label{tab:fbst}
\setlength\tabcolsep{0pt} 
\footnotesize\centering
\begin{tabular*}{\columnwidth }{@{\extracolsep{\fill}}r*6c}
\toprule
Algorithm & $\Exp{\Delta_{EF}}$ & $\Exp{\Delta_{PE}}$ & $\Exp{\Delta_{Prop}}$ & $\Exp{\Delta_{MM}}$ & $\Exp{\norm{X^{opt} - X^{alg}}_{max}}$ & $\Exp{\norm{X^{opt} - X^{alg}}_1}$ \\
\midrule
Level of confidence & $10^{-3}$ & $10^{-2}$ & $10^{-3}$ & $10^{-3}$ & $10^{-1}$ & $10^{-1}$ \\
\midrule
\HopeOnline   &  0.058 & 0.36 & \textbf{0.057} & 0.92 & \textbf{1.37} & \textbf{2.24} \\
\HopeFull & 0.079 & 0.37 & 0.079 & 0.90 & 1.42 & 2.61 \\
\EtOnline & 0.091 & 0.49 & 0.89 & 0.90 & 1.68 & 3.12 \\
\EtFull & 0.079 & 0.37 & 0.079 & 0.90 & 1.38 & 3.12 \\
\MaxMin & 0.065 & 0.64 & 0.34 & \textbf{0.93} & 1.62 & 3.48 \\
\textsc{Greedy} & 0.13 & \textbf{0.32} & 0.13 & 0.86 & 1.77 & 3.54 \\
\textsc{Adaptive-Threshold} & \textbf{0.00058} & 4.68 & 0.91 & 0.49 & 16.15 & 93.14 \\
\bottomrule
\end{tabular*}
\end{table*}

\begin{figure}[!h]
    \centering
    \includegraphics[width=\columnwidth]{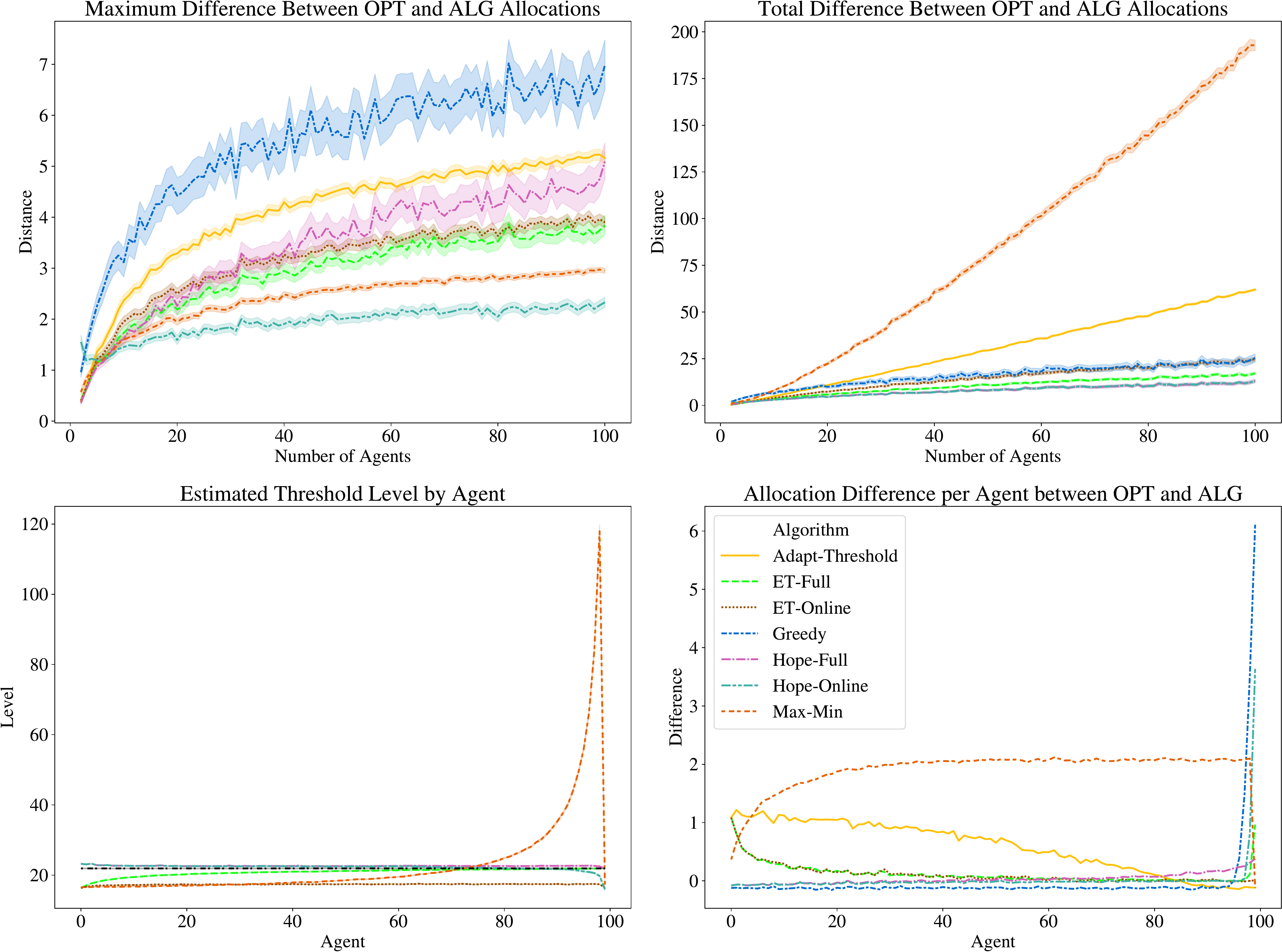}
    \caption{Comparison of \HopeOnline, \HopeFull, \EtOnline, \EtFull, \MaxMin, and the \textsc{Greedy} and \textsc{Adapt-Threshold} algorithms on a synthetic dataset where each type $\theta_i \sim \text{Gaussian}(15, 3)$ where the Gaussian distribution was discretized into twenty buckets.  Top left: comparison of $\Exp{\norm{X^{opt} - X^{alg}}_{\infty}}$ for the different algorithms as we scale $n$ from $1$ to $100$.  Top right: comparison of $\Exp{\norm{X^{opt} - X^{alg}}_{1}}$ for the different algorithms as we scale $n$ from $1$ to $100$.  Bottom left: comparison of the threshold used in the allocation for different groups, averaged over many simulations with $n = 100$ agents.  Bottom right: comparison of the agent by agent allocation difference, $\Exp{|X^{opt}_i - X^{alg}_i|}$ for the different agents with a fixed $n = 100$ agents.}
    \label{fig:gaussian}
\end{figure}

\begin{figure}[!h]
    \centering
    \includegraphics[width=\columnwidth]{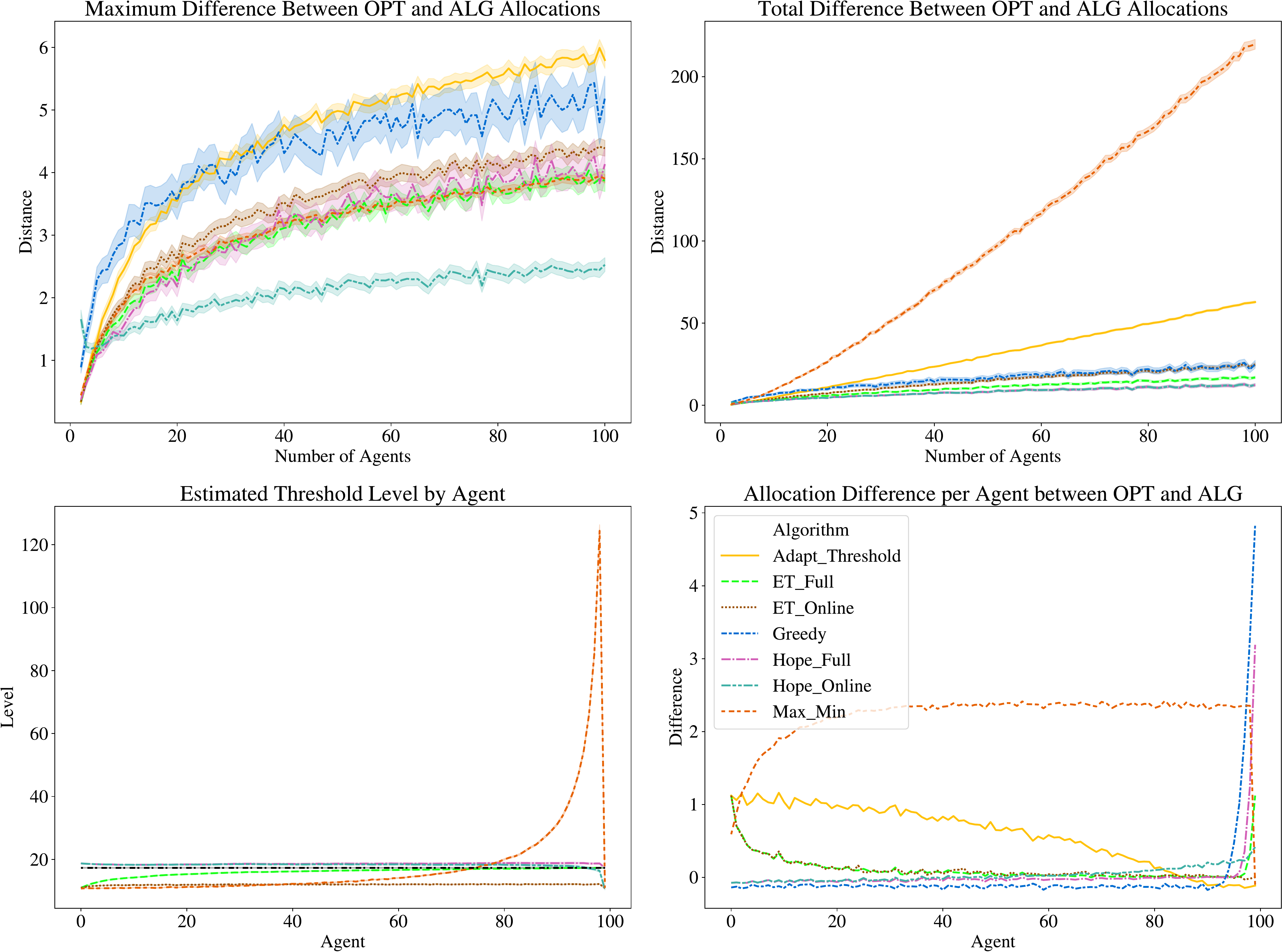}
    \caption{Comparison of \HopeOnline, \HopeFull, \EtOnline, \EtFull, \MaxMin, and the \textsc{Greedy} and \textsc{Adapt-Threshold} algorithms on a synthetic dataset where each type $\theta_i \sim \text{Poisson}(10)$ where the Poisson distribution was discretized into twenty buckets.  Top left: comparison of $\Exp{\norm{X^{opt} - X^{alg}}_{\infty}}$ for the different algorithms as we scale $n$ from $1$ to $100$.  Top right: comparison of $\Exp{\norm{X^{opt} - X^{alg}}_{1}}$ for the different algorithms as we scale $n$ from $1$ to $100$.  Bottom left: comparison of the threshold used in the allocation for different groups, averaged over many simulations with $n = 100$ agents.  Bottom right: comparison of the agent by agent allocation difference, $\Exp{|X^{opt}_i - X^{alg}_i|}$ for the different agents with a fixed $n = 100$ agents.}
    \label{fig:poisson}
\end{figure}

\begin{figure}[!h]
    \centering
    \includegraphics[width=\columnwidth]{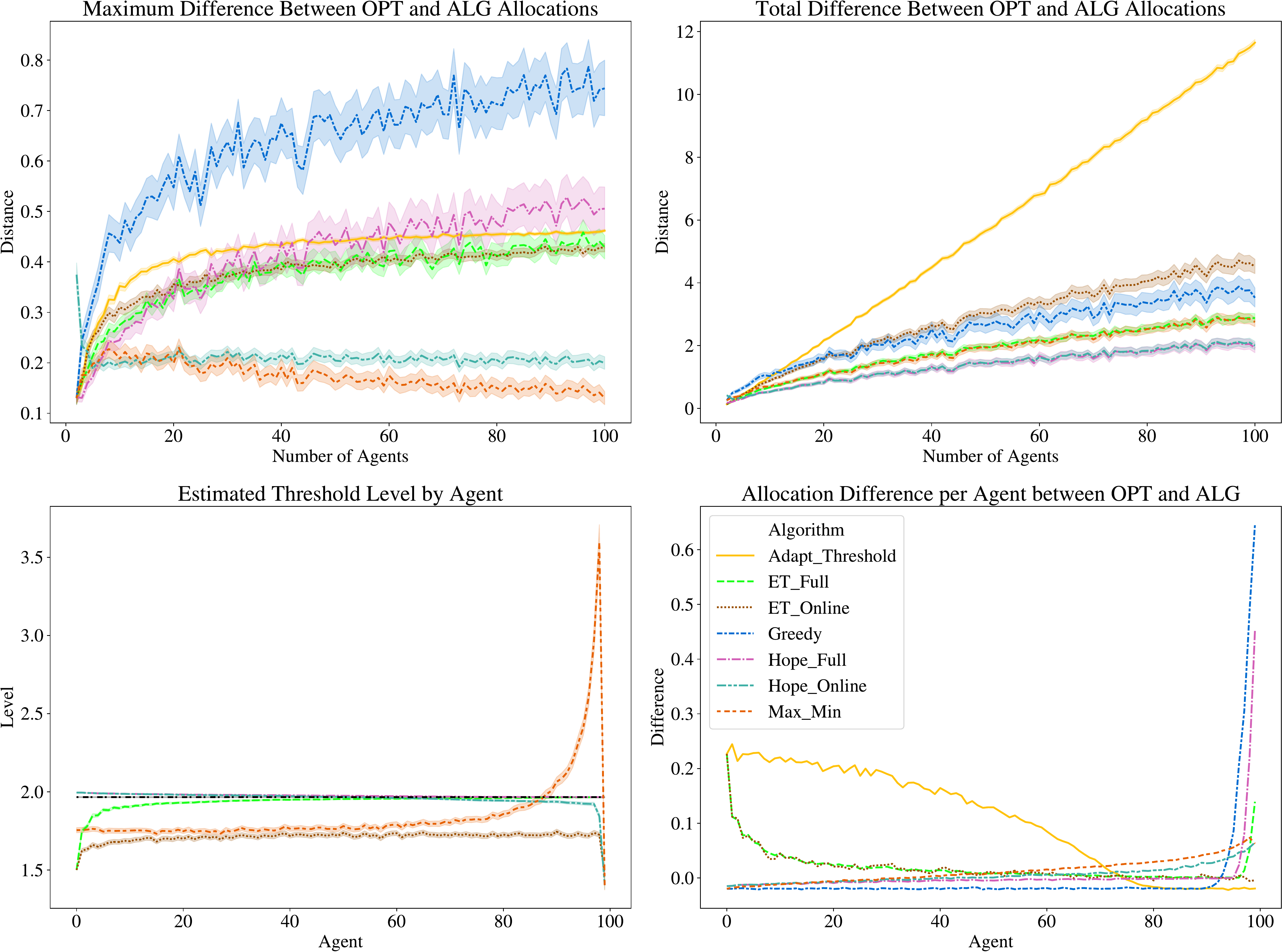}
    \caption{Comparison of \HopeOnline, \HopeFull, \EtOnline, \EtFull, \MaxMin, and the \textsc{Greedy} and \textsc{Adapt-Threshold} algorithms on a synthetic dataset where each type $\theta_i \sim \text{Uniform}\{1,2\}$.  Top left: comparison of $\Exp{\norm{X^{opt} - X^{alg}}_{\infty}}$ for the different algorithms as we scale $n$ from $1$ to $100$.  Top right: comparison of $\Exp{\norm{X^{opt} - X^{alg}}_{1}}$ for the different algorithms as we scale $n$ from $1$ to $100$.  Bottom left: comparison of the threshold used in the allocation for different groups, averaged over many simulations with $n = 100$ agents.  Bottom right: comparison of the agent by agent allocation difference, $\Exp{|X^{opt}_i - X^{alg}_i|}$ for the different agents with a fixed $n = 100$ agents.}
    \label{fig:simple}
\end{figure}
\end{document}